\newcommand{\Z}{\mathbb{Z}}
 \newcommand\ForAuthors[1]
\newtheorem{definition}{Definition}
\newtheorem{lemma}{Lemma}
\newtheorem{theorem}{Theorem}
\newtheorem{example}{Example}
\def\F{\mbox{$Fin_+$}}
\def\cG{{\cal G}}
\def\cI{{\cal I}}
\def\cO{{\cal O}}
\def\cP{{\cal P}}
\def\cR{{\cal R}}
\def\cS{{\cal S}}
\def\cT{{\cal T}}
\def\Gz{\cG_0}
\def\Gf{\cG_f}
\def\Gp{\cG_P}
\newcommand{\I}[1]{\mathit{#1}}
\DeclareMathOperator{\Lit}{Lit}
\newcommand{\la}{\langle}
\newcommand{\ra}{\rangle}
\newcommand{\Lcal}{\mathcal{L}}
\newcommand{\Msf}{\textsf{M}}
\newcommand{\Ssf}{\textsf{S}}
\newcommand{\ssf}{\textsf{s}}
\newcommand{\pre}{\textsf{pre}}
\newcommand{\RM}[1]{{\rm{#1}}}
\newcommand{\tsf}{\textsf{t}}
\newcommand{\act}{{\sf ACT}}
\newcommand{\acte}{{\sf act_e}}
\newcommand{\sched}{{\sf Sched}}
\newcommand{\ba}{{\bf a}}
\newcommand{\jact}{\mbox{$\overline{\act}$}}
\newcommand{\sfa}{{\sf a}}
\newcommand{\comment}[1]{}
\newcommand{\Conz}{{\sf Con}_0}
\newcommand{\rd}{{\sf read}}
\newcommand{\wrt}{{\sf write}}
\newcommand{\footremember}[2]{%
    \footnote{#2}
    \newcounter{#1}
    \setcounter{#1}{\value{footnote}}%
}
\title{A simplicial complex model of dynamic epistemic logic for
 fault-tolerant distributed computing}
\author{Eric Goubault\footremember{X}{LIX, Ecole Polytechnique, CNRS, Universit\'e Paris-Saclay, 91128 Palaiseau, France \texttt{goubault@lix.polytechnique.fr}} \and 
Sergio Rajsbaum\footremember{UNAM}{Instituto de Matematicas, UNAM, Ciudad Universitaria Mexico 04510, Mexico \texttt{rajsbaum@im.unam.mx}}} 
\begin{document}
\maketitle
The usual epistemic S5 model for multi-agent systems is a Kripke graph, whose edges are labeled with the agents that do not distinguish between two states. We propose  to uncover the higher dimensional information implicit in the Kripke graph,  by using as a model its dual, a chromatic simplicial complex. For each state of the Kripke model there is a facet in the complex, with one vertex per agent. If an edge $(u,v)$ is labeled with a set of agents $S$, the facets corresponding to $u$ and $v$  intersect in a simplex consisting of one vertex for each  agent of $S$. Then we use dynamic epistemic logic  to study how the simplicial complex epistemic model changes after the agents communicate with each other. We show that there are topological invariants preserved from the initial epistemic complex to the epistemic complex after an action model is applied, that depend on how reliable  the communication is. In turn these topological properties determine the knowledge that the agents may gain after the communication happens.

\tableofcontents

\newpage

\section{Introduction}


Dynamic epistemic logic (DEL) considers multi-agents systems and studies 
how their knowledge changes when communication events occur.
It extends ordinary epistemic logic by the inclusion of \emph{event models} to describe actions.
An epistemic Kripke model represents the knowledge of the agents about an initial situation,
and an event model represents their knowledge about the possible event taking place in this situation.
  The most common situation studied, which gave birth to DEL,
   is that of  a public announcement to all the agents of a formula $\psi$,
   but there is a
   general logical language to reason about information and knowledge change~\cite{baltag&:98,BaltagM2004}.
A product update operator  defines a Kripke model that results as a 
  consequence of executing actions on the initial Kripke model.

\vskip -1cm

\paragraph*{Overview}
 An epistemic  S5 model is typically used to represent  states of a multi-agent system, where edges 
 of the Kripke structure are labeled
with the agents that do not distinguish between the two states.
In this work we show there are  underlying topological invariants induced by the action model,
that determine what the agents may know after a communication event takes place.

We argue that while the Kripke model seems to be a one-dimensional structure (a graph), it  actually encodes
a high dimensional topological object, namely, a \emph{simplicial complex} corresponding,
in a precise categorical sense, to the dual of the Kripke structure. Thus, there is a sort of epistemic complex
representing the knowledge of the agents about the initial situation, and another epistemic complex,
which represents their knowledge obtained through the product with the communication action model. 

We show that these  complexes indeed carry topological information. In the figure below, $I$ represents an initial
epistemic simplicial complex model (equivalent to Kripke model), and $P$ the product with an action model.
The action model preserves topological invariants from the initial model $I$  to the complex $P$ after
the communication actions have taken place. Actually, how much of the topology is preserved, depends
on how reliable is the communication among the agents (see explanation below),
and encode the
degree at which some knowledge does not evolve.

We explore a class of action models that \emph{fully} preserve the topology of the initial complex,
and hence, in a precise sense, yield the \emph{least} information to the agents.
\begin{wrapfigure}{r}{-0.4\textwidth}
 \centering
\begin{tikzpicture}
  \node (s) {$P$};
  \node (xy) [below=2 of s] {$\Delta \subseteq I \times O$};
  \node (x) [left=of xy] {$I$};
  \draw[<-] (x) to node [sloped, above] {$\pi_I$} (s);
  \draw[->, dashed, left] (s) to node {$h$} (xy);
  \draw[->] (xy) to node [below] {$\pi_I$} (x);
\end{tikzpicture}
\end{wrapfigure}
We  define another \emph{knowledge goal} action model, that when it is used
to make the product with the initial epistemic model, yields an epistemic model $\Delta$
representing what the agents should be able to know, after applying the communication action model.
There is sufficient knowledge in $P$ if there exists a (properly defined)  morphism $h$ from $P$ to  $\Delta$ that makes the diagram commute.

We stress that this is far from telling the whole story. Our formalization of knowledge goals
represents only the case of what the processes should learn \emph{once}. Indeed, we provide
a precise categorical equivalence between dynamic epistemic Kripke models of tasks 
and tasks as  epistemic complexes (so we have two versions of this diagram, one for Kripke models and one for complex models).
A more general
situation would study requirements on how knowledge should evolve with time.
Also, we concentrate on a very specific setting. The class of communication action models we study, represents all possible ways in
which the agents can send to each other messages (but we use shared memory to model the communication pattern), allowing for any pattern
of arrivals, including message losses. Roughly speaking, assume
 a set of agents $A=\{0,\ldots,n\}$ exchange messages once.   Each agent sends 
 a single message to every other agent. A communication event
 defines which messages are delivered and in which order, assuming no partitions happen. 
 Although our setting could be used to study even weaker situations, where messages are lost
 in a way that there is no communication between two groups of agents.
 This is, essentially, the \emph{one round} communication action model we consider, and
 where we can show that the topology is fully preserved, and knowledge gained is minimal (except
 for of course  weaker partitioning situations). Then we study multi-round versions of the action model,
 and show that also in these the topological invariants are fully preserved.
 
Finally, let us mention that what we developped here is oriented towards distributed
tasks, where problem are specified in terms of input-output (values) relationships. 
This does not account for inputless problems such as counting problems, timestamps
problems etc., for which our framework is going to be generalized in a subsequent paper. 

\vskip -1cm

\paragraph*{Main results}
As a concrete case study of the theory described above, we consider a multi-agent system
representing a distributed computing system, where processes can fail by crashing, and
communicate by reading and writing shared variables. Using shared memory instead of message passing
simplifies
some of the technical development.
Indeed,
we have been inspired to consider simplical complexes as the dual of Kripke models
from the study of distributed computability of tasks using topology~\cite{HerlihyKR:2013}.
This approach to analysing distributed computations 
represents executions as simplicial complexes. As the computation
evolves, the topology of the complex changes. A \emph{task} is defined by a relation from an input complex $\cI$ specifying
the input values to the processes, and an output complex $\cO$ specifying values that  the processes should decide
after communicating with each other. The seminal result in the area is the algebraic topology  \emph{set agreement} impossibility, saying
that $n+1$ processes cannot agree in at most $n$ different
values, \emph{wait-free}, i.e., if the processes are asynchronous and may crash~\cite{1993GeneralizedFLPImposibility_BG,1999TopologicalStructureAsynchronous_HS,SaksZ00}.
The categorical equivalence between  Kripke models and simplicial complex epistemic models we establish provides,
in one direction,  a DEL semantics to distributed computability,
 opening the possibility of reasoning  about  knowledge change in distributed
computing. For instance, the set agreement impossibilty is exposed as the impossibility of gaining sufficient knowledge 
as required by a decision action model.
In the other direction, the connection allows to bring in the topological invariants known in distributed computing,
to DEL, and in particular show that knowledge gained after an epistemic action model
is intimately related to  higher dimensional topological properties.

Section~\ref{subsec:model}  describes the distributed computing model used as a case study.
 We formalize task solvability for distributed computing in Section \ref{sec:kripkeModel}
in terms of Kripke frames, one of the classical models for multi-modal S5 logics. 
For this Kripke frame formalization, we note that the classical ``carrier map'' approach
of \cite{HerlihyKR:2013} can be expressed as the existence of a certain 
commutative diagram involving only morphisms of Kripke frames, see Theorem \ref{thm:Kripketasksolv}. We then show 
in Section~\ref{sec:dynEpSemantics} that these Kripke frames can be used to model a logic of
knowledge for agents, and lift Theorem \ref{thm:Kripketasksolv} on Kripke models, this
is Theorem \ref{thm:Kripketasksolv2}. There is then a simple epistemic interpretation
of task solvability~: the protocol can only improve knowledge of the agents, and a task
is solvable only if it has improved so that to have at least the knowledge formalized
by the specification. The other major contribution of this paper is Section \ref{sec:categor}. We first show that the category of (proper) Kripke frames is equivalent to the category
of pure chromatic simplicial complexes, Theorem \ref{thm:equiv}, 
which is the one in which many results
have been developped using combinatorial topology, over the years, see \cite{HerlihyKR:2013}. This also lifts to Kripke models and what we call simplicial models, Theorem
\ref{thm:equiv2}. From this, we show that the classical protocol complex approach
of \cite{HerlihyKR:2013} and our epistemic logic approach are equivalent. The interpretation in terms of knowledge progress and loss of it, has its topological counterpart~: there are some topological invariants on the simplicial models that measure this, extending the intuition that common knowledge is linked to some connectivity
conditions. 
Conclusions and open problems are in  Section~\ref{sec:conclusions}.

Note that we could also have directly defined an interpretation of epistemic logics in (pure chromatic) and later, proved its correctness. We prefered to base our presentation on classical grounds (Kripke models) and show and use an equivalence
of categories to end up with a correct interpretation of epistemic logics in simplicial complexes, hence in a geometric framework. 

\vskip -.5cm
\paragraph*{Related work}
Epistemic logic has been used very successfully and many times to analyze and design fault tolerant distributed systems.
In this paper  we use 
DEL~\cite{sep-dynamic-epistemic,DEL:2007}.
Complex epistemic actions can be
represented in \emph{action model logic} \cite{BaltagM2004,DEL:2007}. Various examples of epistemic actions have been considered, especially 
\emph{public announcement logic},  a well-studied example of DEL, with many applications in dynamic logics, knowledge representation and other
formal methods areas. However, to the best of our knowledge, it has not been used in distributed computing theory, where fault-tolerance
is of primal interest. 
DEL~\cite{BaltagM2004,DEL:2007}
extends epistemic logic through dynamic operators formalizing
information change.
Plaza~\cite{plaza:89} first extended epistemic logic to model public
announcements, where the same information is transmitted to all
agents.
Next, a variety of approaches (e.g.,~\cite{baltag&:98,DEL:2007}) generalized
such a logic to include communication that does not necessarily reach
all agents.
Here, we build upon the approach developed by Baltag et
al.~\cite{baltag&:98} employing ``action models''.
We have focused in this paper on the classical semantics of multi-modal S5 logics. 
The seminal work on the
subject, see e.g. \cite{sep-dynamic-epistemic}, has considered topological models. 
Future work will 
include the relationship between these topological models and 
the geometric realization of our simplicial models of Section~\ref{sec:categor}.
Another classical model for epistemic logic is the one
of Interpreted Systems of  Fagin, Halpern, Moses and Vardi (see e.g. \cite{InterpretedSystems}). These are related to the semantics we
give of our distributed systems, in Section \ref{sec:states}. More importantly, the adjunctions between Kripke frames and Interpreted Systems, developped
in \cite{Porter} are in the same spirit as the equivalence of categories between (proper) Kripke frames and pure chromatic
simplicial complexes that we develop in Section \ref{equivalencesimpcomp}. The structure of pure chromatic
simplicial complexes has the advantage to behave well with respect to products 
with a natural interpretation in terms of action models, 
and also conveys some important geometrical contents, that
we develop in Section \ref{equivalencesimpcomp}. 

Seminal work on knowledge and distributed systems is of course one of the inspirations
of the present work, see e.g. \cite{FHMVbook}, as well as the  combinatorial topology approach for fault-tolerant distributed computing,
see e.g. \cite{HerlihyKR:2013}. Quite a few results have been achieved using 
epistemic logic to model 
fault-tolerant distributed protocols, see e.g. 
\cite{MosesR2002} but the authors know no previous work on relating the combinatorial
topological methods of \cite{HerlihyKR:2013} with Kripke models.
We should note that the formulation of carrier maps as products which is
developed in Section \ref{sec:carrierprod} seems to have 
been partially observed in 
\cite{Havlicek2000}. 

\section{Distributed computing models}

\label{subsec:model}
In a  \emph{wait-free} model of $n+1$ processes,  a process does not use instructions
that wait for events in other processes. Wait-free computation has been studied thoroughly e.g.~\cite{Herlihy:waitFree1988,Raynal-waitFree2005}, and
has turned out to be fundamental in the theory of distributed computing~\cite{HerlihyKR:2013}. Simulations and reductions can be used
to transfer results to other models e.g.~\cite{BorowskyGLR01,GafniKM:generalizedACT:2014,HerlihyR12}. For instance, 
a wait-free   
model can be used as a basis to study models where  $t$ processes may crash (and it makes sense for
a process to wait until it hears from $n+1-t$ processes). Wait-free models tolerate any number of processes crashing,
yet the treatment of failures is substantially simplified, and are mostly taken care implicitly, as we shall see below. 
Indeed, we consider only models where failures are not detectable in a finite execution (because it
is indistinguishable if a  process  crashed or is just slow); as opposed to synchronous models, where if a message
does not arrive by a given time, a process has crashed.
We concentrate in models where  communication is by read/write shared variables, but the framework can be used to
study more powerful communication primitives available in practice e.g.~\cite{Kogan:2012:MCF}, as well as message passing models.

\subsection{Informal overview of fault-tolerant distributed computing theory}
We start by recalling some of the basics of distributing computing, most relevant to this work; several textbooks contain
further details~\cite{2004dc_AW,HSbook,LynchBook:1996}.

\subsubsection{Overview of distributed systems}
\label{sec:infModels}
The most basic model we consider is the   {\it one-round read/write} model ($\mathsf{WR}$), e.g.~\cite{2004dc_AW}. It consists of $n+1$ processes  denoted by the numbers $[n]=\{0,1,\ldots,n\}$. A process is a deterministic (possibly infinite) state machine. Processes  communicate through a shared memory array $\mathsf{mem}[0\ldots n]$ which consists of $n+1$ single-writer/multi-reader atomic registers. Each process  accesses the shared memory by invoking the atomic operations $\mathsf{write}(x)$ or $\mathsf{read}(j)$, $0\leq j\leq n$. The $\mathsf{write}(x)$ operation is used by process $i$  to write  value $x$ to its own register, $i$, and process $i$  can invoke $\mathsf{read}(j)$ to read register $\mathsf{mem}[j]$, for any $0\leq j\leq n$. 
 In its first operation, process $i$ writes a value  to $\mathsf{mem}[i]$, then it reads each of the $n+1$ registers, in an arbitrary order. 
 Such a sequence of operations, consisting of a write followed by all the reads, is abbreviated by  $\mathsf{WScan}(x)$.

A \emph{state}  of the system consist of the state of each process and the state of the shared memory.
An \emph{execution} is defined by an initial state of the system  and  an interleaving of the operations of the processes; it is
an alternating sequence of states and operations.
In the most basic case, \emph{every} interleaving of the operations is a possible execution. Namely, we have a fully asynchronous system.

Several models  derived from the one-round read/write $\mathsf{WR}$ model have been considered in the literature.
\begin{itemize}
\item
 In the \emph{multi-round} version of the $\mathsf{WR}$ model,
  processes can  execute any number of write and read operations on the array $\mathsf{mem}$. 
  It is often convenient to assume the program of a process is structured in rounds, each one consisting
  of a $\mathsf{WScan}()$ operation\footnote{This assumption is done without loss of generality
  for computability purposes; to reduce complexity a program would not always be structured this way.}.
  This is the most commonly considered version, e.g.~\cite{2004dc_AW}.

\item
In the \emph{iterated} $\mathsf{WR}$ model, processes communicate through a sequence of arrays. They all go through the sequence of arrays $\mathsf{mem}_0$, $\mathsf{mem}_1\ldots$ in the same order, executing a
single $\mathsf{WScan}()$ operation on $\mathsf{mem}_r$, for each $r\geq 0$.
Namely,  process $i$ executes one write to $\mathsf{mem}_r[i]$  and then reads one by one all entries $j$, $\mathsf{mem}_r[j]$, in arbitrary order, before proceeding to do the same on $\mathsf{mem}_{r+1}$. For an overview of iterated models see e.g.~\cite{RajsbaumIterated2010}.

Notice that the one round  $\mathsf{WR}$ model is the special case of either the multi-round or the iterated 
$\mathsf{WR}$ models, when the program of all processes consists of a single  $\mathsf{WScan}()$ operation.

 \item
 The \emph{snapshot} versions of the previous, multi-round and iterated models, is obtained by replacing
 the $\mathsf{WScan}()$ by a $\mathsf{WSnap}()$ operation, that guarantees that the reads 
 of  the $n+1$ registers happen all atomically, at the same time. In this case an execution is an interleaving
 of $\mathsf{write}$ and $\mathsf{snap}$ operations by the processes. Snapshot operations can be implemented
 wait-free and are a very useful abstraction, see e.g.~\cite{AADGMSsnaps}.

  \item
 The \emph{immediate snapshot} version of the previous multi-round and iterated models is obtained by assuming
 that the whole $\mathsf{WSnap}()$ operation happens ``atomically," and hence is called
 an  $\mathsf{ImmSnap}()$ operation. More precisely,
 an execution in such a model is represented by a sequence of concurrency classes,
 where in each class a set of  $\mathsf{ImmSnap}()$ operations take place concurrently~\cite{AttiyaR2002,1993GeneralizedFLPImposibility_BG,SaksZ00}.
Thus, in a {concurrency class}, the  writes  by the set of processes participating in the concurrency class 
occur in arbitrary order, followed by a read to all registers by each of these processes, in arbitrary order. 
The iterated version has been considered since~\cite{1997SimpleAlgorithmicallyReasoned_BG}.
\end{itemize}

\subsubsection{Overview of distributed tasks}
A  \emph{decision task} is the distributed equivalent of a function, where each process knows only part of the input, and after communicating with the other processes, each process computes part of the output. For instance, in the $k$-\emph{set agreement} task~\cite{1993MoreChoices_Ch} each process starts with an input value from some set of input values $V$ with $|V| >k $, and each process
has to decide an output value, that is one of the input values, and such that
 at most $k$ different values are decided; when $k=1$ we get the  \emph{consensus} task. The $k$-set agreement
 task is solvable (by asynchronous processes communicating using read/write registers) if and only if at most $t$ processes can crash, for $t<k$, see~\cite{1993GeneralizedFLPImposibility_BG,BorowskyGLR01,FischerLP85,1999TopologicalStructureAsynchronous_HS,SaksZ00}. 

\comment{
We consider the following family of tasks inspired by the set agreement task, but defined in terms of binary values.
 The family  has both solvable and unsolvable instances
in our wait-free models, and for the solvable instances,  they require different number of rounds of computation.

Intuitively, processes  initially belong to either one or two groups: either all belong to group 0
or all belong to group 1, or some belong to one and some to the other. 
The goal is for the processes to split in two groups (unless they already belong to two groups).
If they are all in the same group, they should fix the situation and should divide themselves in two groups. If they are already divided in two groups, they should do nothing.
There is a space of possible groups $G= \{ 0,1,\ldots, g \}$, for some integer $g$, $g \geq 1$.

More precisely,  each process $p_i$ starts with a binary input value, either $0$ or $1$, representing its initial group.
 Furthermore,
assume any assignment of initial binary values to the processes is possible, except to
the one where they all  start in group 1. Namely, the only initial situation to fix is when they have all input 0.
A distributed algorithm solves the task if each process produces a binary output value, such that
\begin{enumerate}
\item 
If initially the  processes are already divided in two groups, the output of a process should be equal to its input;
\item
If all processes have the same input group,  $0$, 
 after communicating with each other for as many rounds as needed, each process should decide an output in $G$,  such that not all processes decide the same group, and in addition,  two different groups are decided, differing in one:
 if a process decides group $y\in\{ 0,1,\ldots, g \}$, then the others can decide  group $x$ where  $x\in\{y, y+1\}$,  or in
$x\in\{y, y-1\}$ (but not both).
\end{enumerate}
The Partitioning into Two Consecutive Groups Problem has no wait-free solution, for any $g\geq 1$. 
\ForAuthors{More to be done here?}

Let us define a weaker version of the problem.
Here the processes are allowed to decide the same group, $x$, but only for $x=g$
(respecting the other rules of the problem, as defined above).
It is not hard to see that the Weak Partitioning into Two Consecutive Groups Problem is wait-free solvable, for any $g$,
and the number of rounds needed to solve it grows with $g$.
}

\subsubsection{Computability of distributed tasks}
A central concern in distributed computability is studying which tasks are solvable in a given distributed computing model (as determined by e.g. the type of communication mechanism available  and the reliability of the processes). Early on it was shown that consensus is not solvable even if only one process can fail by crashing,
when the processes are  asynchronous and they communicate by message passing~\cite{FischerLP85}, or even by writing and reading a shared memory~\cite{LouiAA:87}. A graph theoretic characterization of the tasks solvable in the presence of at most one process failure appeared soon after~\cite{BiranMZ90}. 

The  \emph{asynchronous computability theorem}~\cite{1999TopologicalStructureAsynchronous_HS} exposed that moving from tolerating one process  failure, to any number of process failures, yields a characterization of the class of decision tasks that can be solved in a wait-free\footnote{When any number of processes may crash, the algorithm run by a process must be \emph{wait-free,} because there is no reason for it to wait for an event occurring in another process.} manner by asynchronous processes based on simplicial complexes, which are higher dimensional versions of graphs. In particular, $n$-set agreement is not wait-free solvable,  with  $n+1$ processes~\cite{1993GeneralizedFLPImposibility_BG,1999TopologicalStructureAsynchronous_HS,SaksZ00}.

Computability theory through combinatorial topology has evolved  to encompass non-independent process failures, arbitrary malicious failures, synchronous and partially synchronous processes, and various communication mechanisms~\cite{HerlihyKR:2013}. Still,  the original wait-free model of the asynchronous computability theorem, where crash-prone processes that communicate wait-free by writing and reading a shared memory is  fundamental. Topological techniques are derived in this model, and then extended to
other models, e.g.~\cite{2013TopologuDistributedAdversaries_HR}. Also, the question of solvability in other models (e.g. $t$ crash failures, for $1\leq t\leq n$), can in many cases be reduced to the question of wait-free solvability ($t=n$), as shown in~\cite{BorowskyGLR01} and \cite{HerlihyR12}.

More specifically, in the \emph{AS model} of~\cite{HerlihyKR:2013} each process can write its own location of the shared-memory, and it is able to  read the entire shared memory in one  atomic step, called a \emph{snapshot}. The characterization  is based on the \emph{protocol complex,} which is a geometric representation of the various possible executions of a protocol. Simpler variations of this  model have been considered. In the \emph{immediate snapshot} (IS) version~\cite{AttiyaR2002,1993GeneralizedFLPImposibility_BG,SaksZ00}, processes can execute  a combined write-snapshot operation. The \emph{iterated immediate snapshot} (IIS) model~\cite{1997SimpleAlgorithmicallyReasoned_BG} is even simpler to analyze, and can be extended (IRIS) to analyze partially synchronous models~\cite{2008IteratedRestrictedImmdediate_RRT}. Processes communicate by accessing a sequence of shared arrays, through immediate snapshot operations, one such operation in  each array. The success of the entire approach hinges on the fact that the topology of the protocol complex of a model determines critical information about the solvability of the task and, if solvable, about the complexity of solution~\cite{HoestS97}.

All these  snapshot models, AS, IS,  IIS and IRIS can solve exactly the same set of tasks. However, the protocol complexes that arise from the different  models are structurally distinct. The combinatorial topology  properties of these  complexes have been studied in detail, providing  insights for why some
tasks are solvable and others are not in a  given model.

\subsection{Formal model}
\label{sec:formalGenericModel}
We adapt  the model of \cite{MosesR2002} (in turn following  the style of \cite{FHMVbook}), for
the wait-free case.

\subsubsection{States, protocols, runs}
\label{sec:states}

There is a fixed finite set of
 processes,   $0,1,2,\ldots, n$, and
an {\em environment,}~$e$, which is
used to model aspects of the system that are not modeled as being part
of
the activity or state of the processes. We will model
the  shared memory as being part
of the environment's state. In addition, we will assume that various
nondeterministic choices such as various delays 
are actions performed by the environment.
For every  $i\in\{ e,0,1,\ldots,n\}$, we assume there is
a set~$L_i$ consisting of all possible {\em local states} for~$i$.
The set of {\em global states,} which we will simply call {\em states},
will consist of
$
\cG\ = \ L_e\times L_0\times\cdots\times L_n.
$. 
We denote by $x_i$ the local state of~$i$ in the state~$x$.

In a given setting, every~$i\in\{ e,0,1,\ldots,n\}$ is
associated with a nonempty set~$\act_i$ of possible actions.
Intuitively, these model 
shared-memory operations  or local
computations the process may perform.
The environment  is in charge of {\it scheduling} the processes.
A {\em scheduling action} is a set $\sched\subseteq \{0,\ldots,n\}$
of the processes that are scheduled to move next.
We assume the existence of a set $\acte$ describing the aspects of the
environment's actions that handle everything other than the scheduling
of processes.
Without loss of generality we will assume that an environment's action
(an element in $\act_e$) is a pair $(\sched,a)$, where $\sched$ is
a scheduling action and $a\in\acte$.
A {\em joint action} is a pair $\bar\sfa= (sa,\ba)$, where
$sa= (\sched,a)$ is in $\act_e$,  and $\ba$ is a function with
domain $\sched$
such that $\ba(i) \in \act_i$ for each $i\in \sched$. Thus, $\bar\sfa$
specifies an action for the environment (via~$\ba$) and
an action for every process that is scheduled to move.
We define the set of joint actions by $\jact$. Clearly, $\jact$ is
determined by a collection of action sets $\{\act_i\}_{i= 1,\ldots,n}$
and a set $\acte$.
Joint actions are the events that cause the global
state to change into a new state.
This is formally captured by the notion of a {\em transition function},
which is a function $\tau:\cG\times\jact\to\cG$
from global states and joint actions to global states, describing how a
joint action transforms the global state.

Processes follow a {\em deterministic protocol},  a function $P_i:L_i\to
\act_i$
specifying the action that~$i$ is ready to perform in every state
of~$L_i$.
The environment however follows a {\em nondeterministic protocol}, a function
$P_e:L_i\to 2^{\act_i}\setminus\emptyset$ specifying for every state
of~$L_e$ a
nonempty
set of actions, one of which~$e$ must perform in that state.%

A run is defined by  a  sequence of
global states and the joint actions that cause the transitions among
them.
Notice that once we fix a deterministic protocol $D=
(D_0,\ldots,D_n)$
for the processes, an action $sa= (\sched,a)$ of the
environment
uniquely determines the joint action $\bar\sfa= (sa,\ba)$ that
will be performed in a given (global) state~$x$:
the set~$\sched$ determines the processes that participate in the joint
action, and $\ba(i)= D_i(x_i)$ for each~$i\in\sched$.
Formally, we model a {\em run} over~$\cG$ and~$\act_e$ as a pair
$R= (r,\alpha)$, where~$r:N\to \cG$ is a  function from the natural
numbers
to $\cG$ defining  sequence of states of~$\cG$, and
$\alpha:N\to\act_e$ defines  a corresponding sequence of environment
actions.
The intuition will be that the joint action caused by~$\alpha(k)$
and the underlying protocol leads us from a state~$r(k)$ to a
state~$r(k+1)$.
As we will see later on, once we fix a model of computation and a
protocol~$D$ for the processes to follow, there will be additional
conditions relating the sequences $r$ and~$\alpha$.
These conditions guarantee,
for example, that the actions recorded by~$\alpha$ do indeed cause the
transitions among the corresponding states recorded by~$r$.
The state~$r(0)$ is called the {\em initial state} of the run~$R$.
We denote by $r(k)_i$ (resp.,\ $r(k)_e$) the local state of process~$i$
(resp.,\ of the environment) in $r(k)$.

An {\em execution} is a 
subinterval of a run,
starting and ending in a state, as described next.
For a run $R= (r,\alpha)$ and a pair $m\le m'$ where $m$ is finite and
$m'$ is finite or infinite, we denote by $R[m,m']$ the execution
starting
at the state $r(m)$ and ending in $r(m')$ and behaving as~$R$ does
between
them.
Formally, $R[m,m']= (\sigma,\beta)$ where $\sigma$ has domain
$[0,m'-m]$ and $\beta$ has domain $[0,m'-m-1]$, and they satisfy
$\sigma(k)= r(m+k)$ and $\beta(k)= \alpha(m+k)$
for every~$k$ in their respective domains.
Notice that, in principle, the same execution can occur in different
runs, and
for that matter even at different times.
A {\em suffix} of a run~$R$ is an execution of the form $R[m,\infty]$
for some
finite~$m$; similarly, a {\em prefix} of~$R$ is an execution of the
form $R[0,m]$.

Given an execution $R$ (possibly consisting of just one state),
let us denote by $R\odot sa$ the execution that results from
extending $R$ by having the environment perform the action~$sa$.
In models in which performing a joint action at a state results in
a unique next state (which will invariably be the case in this paper),
every run of a deterministic protocol~$D$ can be represented in the form
$x\odot sa_1\odot sa_2\odot\cdots$ where $x$ is
an initial state and $sa_i$ is an environment action,
for every integer $i\ge 0$.

A  {\em system}   $\cR$ is  a set of runs.
With respect to a system~$\cS$, a state~$y$ is said to {\em extend} the
state~$x$ if there is a finite execution in some run of~$\cS$ that
starts in~$x$ and ends in~$y$.
A run~$R$ is said to {\em contain} a state~$x$ if $x$ is one of the
global states in~$R$.
For conciseness, we will use terminology such as
{\em a state~$x$ of~$\cS$}, when we mean a state~$x$
appearing in a run of $\cS$, or {\em an initial state of $\cS$},
when we mean a state appearing as an initial state in a run of $\cS$.
By convention, $x$ extends $x$ for every state~$x$ of $\cS$.

\subsubsection{Shared memory models}
The standard asynchronous shared-memory model is  well known,
and detailed formal descriptions can be found in textbooks
such as \cite{2004dc_AW,HSbook}. We now briefly review the features of the model
that are relevant for the analysis presented in this section.

We assume the standard asynchronous shared-memory model where
$n+1$ processes, $n\geq 2$, communicate by reading and writing
to single-writer/multi-reader, shared variables, and
any number of processes can crash.
A (global) {\em state}~$x$
of the system is a tuple specifying a local state~$x_i$ for every
process~$i$,
and the state of the environment, which in this case consists of
the assignment of values to the shared variables, as well as the set of
pending shared-memory operations, and the set of pending reports
for read operations that have been recorded (the value has been read)
but not yet reported to the reading process.
The pending operations are the \rd\  and \wrt~operations that have been issued for these variables and have not
yet taken effect.

The sets $\act_i$ and $\act_e$ of the actions of the
processes and the environment are defined as follows.
A process performs an action only when it is scheduled to move.
This action is either  a \rd~of a shared variable (belonging to it or to some other process),
or a \wrt\ to one of its own variables. 
For simplicity we do not include local operations, we assume they are part of the read and write operations.
An action of the environment can have one of three forms:
(a) scheduling a process to move---resulting in the process performing
an
action,
(b) performing a pending shared-memory operation, or
(c) reporting the value read in a recorded \rd\ operation to the
reading process.

A run of a given deterministic protocol~$D$ in this model is a run
$R= (r,\alpha)$ satisfying the following:
\begin{itemize}
\item[(i)] $r(0)$ belongs to a set of possible initial states,
\item[(ii)] each process follows its protocol,
 and every pair of consecutive states are related according
to the operations that take place as scheduled by the environment.
\end{itemize}
If in addition
\begin{itemize}
\item[(iii)] the 
read and write actions issued are appropriately are eventually serviced by the environment, 
\end{itemize}
then we say that the run is  {\em admissible.}

Let $\cS(D)$ be the system consisting of the set of all admissible
runs of~$D$, where 
 each  process  is scheduled to move exactly the same  number of times, $N$. If a task is solvable, it is solvable in a finite number of operations, $N$. Alternatively, when a task is not solvable,
there is no $N$ for which it is solvable. We may also consider systems that are strictly contained in  $\cS(D)$,
where although all processes execute $N$ operations, not all interleavings are included.

Recall that  $R\odot sa$ is the execution that results from
extending $R$ by having the environment perform the action~$sa$.
In a shared memory model performing a joint action at a state results in
a unique next state and assuming the environment serves immediately every read and write operation,
every run of a deterministic protocol~$D$ can be represented in the form
$x\odot sc_1\odot sc_2\odot\cdots$ where $x$ is
an initial state and $sc_i$ is a scheduling action,
for every integer $i\ge 0$. That is, it is sufficient to state which is the set of processes that are scheduled to move
in a state, to determine the action of the environment, the action of the processes, and what will be the next state.

We continue  by defining models of distributed
systems in a more general manner, that can capture snapshot and immediate snapshot models.

\subsubsection{Models of distributed computation}
\label{sec:models}

We define a generic model of computation, that can be used in situations other than shared memory.
A {\em model of distributed computation}
is determined by sets~$L_i$, $i\in\{ e,0,1,\ldots,n\}$, of local states
for the processes and the environment, and corresponding sets of actions
$\act_i$, for every~$i\in\{ e,1,2,\ldots,n\}$,
and by a tuple $M= (\Gz,P_e,\tau,\Psi)$, where the following hold:
\begin{itemize}
\item $\Gz\subseteq\cG$ is called the set of {\em initial states}.
The identity of~$\Gz$ will depend on the type of analysis
for which the model is introduced. When we focus on a
particular problem such as consensus,
$\Gz$ is the set~$\Conz$ of initial states for consensus.
\item $P_e$ is a (nondeterministic) protocol for the environment.
\item $\tau$ is a transition function.
\item $\Psi$ is a set of runs over~$\cG$ and~$\act_e$,
such that for every pair of runs~$R$ and~$R'$ that have a suffix in
common,
$R\in \Psi$ if and only if $R'\in \Psi$.
The set~$\Psi$ is called the set of {\em admissible} runs in the model.
This is a tool for specifying fairness properties of the model. For
example, properties such as ``every message sent is eventually
delivered'' or ``every process moves infinitely often''
are enforced by allowing as admissible only runs in which these
properties hold.
The condition we have on~$\Psi$ being determined by the suffixes of its
runs ensures that admissibility depends only on the infinitary behavior
of
the run.
\end{itemize}

We say that a run~$R= (r,\alpha)$ is a {\it run of the protocol
$D= (D_1,\ldots,D_n)$ in}
$M= (\Gz,P_e,\tau,\Psi)$ when
\begin{itemize}
\item[(i)] $r(0)\in\Gz$, so that~$R$ begins in a legal initial state
according to~$M$,
\item[(ii)] $\alpha(k)\in P_e(r(k)_e)$ for all~$k$,
\item[(iii)] $r(k+1)= \tau(r(k),(\alpha(k),\ba^k))$ holds for all
$k\ge 0$,
where the domain of $\ba^k$ is the set $\sched$ in~$\alpha(k)$,
and $\ba^k_i=  D_i(r(k)_i)$ for every~$i\in\sched$,%
\footnote{Given this choice, any deviations of a process from the
protocol, as may happen in a model with malicious failures, will need to
be modeled as resulting from the environment's actions.
The behavior of faulty processes in such a case will be
controlled by the environment.}
and
\item[(iv)] $R\in\Psi$, so that~$R$ is admissible.
\end{itemize}
Condition (ii) implies that the environment's action at every state
of~$R$
is legal according to its protocol~$P_e$,
and condition (iii) states that the state transitions in~$R$ are
according
to the transition function~$\tau$, assuming that the joint action is
the one determined by the environment's action and the actions that the
protocol~$D$ specifies for the processes that are scheduled to move.
A run satisfying properties (i)--(iii) but not necessarily the
admissibility
condition (iv) is called {\em a run of~$D$ consistent with~$M$}.
It is a run in which the initial state and local transitions are
according to~$D$ and~$M$, but the admissibility conditions imposed
by~$\Psi$ are not necessarily satisfied.

The notions of models and protocols give us a way of focusing on a
special
class of systems, resulting from the execution of a given protocol in
a particular model.
We denote by $\cS_N(D,M,I)$ the system $\cR$,
where $\cR$ is the set of all executions of protocol~$D$ in the model~$M$
that start in initial states from a set~$I$, where $I\subseteq \Gz$, and  each process
executes exactly $N$ operations.
When (any of) $D,M$ or $N$ are clear from the context, we may simply write $\cS(I)$.

\subsubsection{Examples of models}
\label{sec:modelsEx}

Consider a set of  initial states $\Gz$, where the initial state of a process $i$
is a pair $(i,v)$, and   $v\in\{0,1\}$ represents an input value. 
 The environment is always in the same initial state, $\epsilon$.

\paragraph{IS and IIS models}
\label{ex:RWsystems}

Consider the {iterated} immediate snapshot $\mathsf{IIS}$ model (Section~\ref{sec:infModels}) where processes communicate through a sequence of arrays
$\mathsf{mem}_0$, $\mathsf{mem}_1\ldots$.
The protocol $D$ that the processes follow consists of executing a
single $\mathsf{WScan}()$ operation on $\mathsf{mem}_r$, for each $r\geq 0$.
Namely,  process $i$ executes one write to $\mathsf{mem}_r[i]$  and then reads one by one all entries $j$, $\mathsf{mem}_r[j]$, in arbitrary order, before proceeding to do the same on $\mathsf{mem}_{r+1}$. 

The set of all $N$-step executions~$\cR$ of protocol~$D$ in the  $\mathsf{IIS}$ model
that start in initial states~$\Gz$ can be obtained by applying schedules of the following form.
Every execution in~$\cR$  is  of the form
$x\odot sc_1\odot sc_2\odot\cdots\odot sc_R$ where $x\in\Gz$ and
 $sc_i$ is a \emph{block scheduling action}
for every integer $1\leq i\leq R$. A block action is an ordered partition of the set $A=\{ 0,\ldots, n\}$, consisting
of \emph{concurrency classes} $[s_0,\ldots,s_k]$, the $s_i$ are non-empty, disjoints subsets of $A$, whose union is $A$.
Thus, $0\leq k\leq n$. When $k=n$ the execution is fully sequential (processes take immediate snapshots one after the other), and when $k=0$ it is fully concurrent (they all execute an immediate snapshot concurrently).

To apply the block action $[s_0,\ldots,s_k]$ to initial state $x$, and obtain state $x\odot [s_0,\ldots,s_k]$, the 
environment schedules the processes in the following order, to execute their read and write operations on $\mathsf{mem}_{0}$.
 It first schedules the processes in $s_0$ to execute their
write operations, and then it schedules them to execute their read operations (the specific order among writes
is immaterial, and so is the case for the reads). Then the environment repeats the same for the processes in $s_1$,
scheduling first the writes and then the reads, and so on, for each subsequent concurrency class $s_i$.
Notice that the read and write operations of block action $sc_i$ are applied to memory $\mathsf{mem}_{i}$.

Consider the composition of the block actions, $sc_1\odot sc_2\odot\cdots\odot sc_R$.
Let $IIS_R$ denote the set of all such composition of block actions. That is, 
$$
sc_1\odot sc_2\odot\cdots\odot sc_R \in IIS_R
$$ 
if and only if each $sc_i$ is an ordered partition of $A$, of the form $[s_0,\ldots,s_k]$.
Then, any execution of~$\cR$ of protocol~$D$ in the  $\mathsf{IIS}$ model can be obtained by applying
a composed block action of $IIS_R$ to an initial state in $\Gz$.
In other words, $\cR$ can be seen as the product of $\Gz$ and $IIS_R$.

 The {non-iterated} immediate snapshot $\mathsf{IS}$ model (Section~\ref{sec:infModels}) is defined in a similar way,
 except that  processes communicate through a single array
$\mathsf{mem}$.
The protocol $D$ that the processes follow consists of executing a
 $\mathsf{WScan}()$ operations repeatedly  on $\mathsf{mem}$.
Namely,  process $i$ executes one write to $\mathsf{mem}[i]$  and then reads one by one all entries $j$, $\mathsf{mem}[j]$, in arbitrary order.

Every execution in~$\cR$  is  of the form
$x\odot sc_1\odot sc_2\odot\cdots\odot sc_R$ where $x\in\Gz$, but now each
 $sc_i$ is a {concurrency class} of processes, to be scheduled as before: the 
environment schedules the processes in the following order, to execute their read and write operations on $\mathsf{mem}$.
 It first schedules the processes in $sc_1$ to execute their
write operations, and then it schedules them to execute their read operations, on the initial state, to obtain a new state,
then repeats the same schedule with processes on  $sc_2$, and so on. 

The only condition is that at the end of the execution
all processes have been scheduled the same number of times, $N$,
when applying  the composition of the block actions, $sc_1\odot sc_2\odot\cdots\odot sc_R$.
Then,  $IS_R$ denotes the set of all such composition of block actions. That is, 
$$
sc_1\odot sc_2\odot\cdots\odot sc_R \in IS_R
$$ 
if and only if each $sc_i$ is a concurrency class of $A$, and each process appears $N$ times over all concurrency classes in the execution.

An interesting example of a schedule by the environment yields a  \emph{solo execution} for process $i$, in both the IS and IIS models.
In the IS model, it schedule all $N$ first concurrency classes with a singleton set containing $i$, and then schedules all other processes
in any order. In the IIS model, it schedules $i$ first, alone, in every one of the $N$ block actions.

\comment{
\paragraph{IS and IIS $t$-resilient models}
\label{ex:RWsystems}

\ForAuthors{To be done}
}

\comment{
In Section~\ref{sec:InformalGenericModel}
there is an informal review of the distributed computing models and problems of concern to this paper, e.g.~\cite{2004dc_AW,HSbook}.
In Section~\ref{sec:formalGenericModel} there are additional details about the following
general  state/transition framework that can capture various distributed computing models.
We adapt  the model of \cite{MosesR2002} (in turn following  the style of \cite{FHMVbook}), for
the wait-free case.
There is a fixed finite set of
 processes, which we shall denote by $0,1,2,\ldots, n$, and
an {\em environment,} denoted by~$e$, which is
used to model aspects of the system that are not modeled as being part
of
the activity or state of the processes. We will model
the  shared memory as being part
of the environment's state. In addition, we will assume that various
nondeterministic choices such as various delays 
are actions performed by the environment.
For every  $i\in\{ e,0,1,\ldots,n\}$, we assume there is
a set~$L_i$ consisting of all possible {\em local states} for~$i$.
The set of {\em global states,} which we will simply call {\em states},
will consist of
$
\cG\ = \ L_e\times L_0\times\cdots\times L_n.
$. 
We denote by $x_i$ the local state of~$i$ in the state~$x$.
The environment  is in charge of {\it scheduling} the processes.
A {\em scheduling action} is a set $\sched\subseteq \{0,\ldots,n\}$
of the processes that are scheduled to move next.
Processes follow a {\em deterministic protocol},  a function $P_i:L_i\to
\act_i$
specifying the action that~$i$ is ready to perform in every state
of~$L_i$.
The environment however follows a {\em nondeterministic protocol}, a function
$P_e:L_i\to 2^{\act_i}\setminus\emptyset$ specifying for every state
of~$L_e$ a
nonempty
set of actions, one of which~$e$ must perform in that state.
A run is defined by  a  sequence of
global states and the joint actions that cause the transitions among
them. An {\em execution} is a  subinterval of a run,
starting and ending in a state.
Given an execution $R$ (possibly consisting of just one state),
let us denote by $R\odot sa$ the execution that results from
extending $R$ by having the environment perform the action~$sa$.
In models in which performing a joint action at a state results in
a unique next state (which will invariably be the case in this paper),
every run of a deterministic protocol~$D$ can be represented in the form
$x\odot sa_1\odot sa_2\odot\cdots$ where $x$ is
an initial state and $sa_i$ is an environment action,
for every integer $i\ge 0$.
A  {\em system}   $\cR$ is  a set of runs.

The standard asynchronous shared-memory model is  well known,
and detailed formal descriptions can be found in textbooks
such as \cite{2004dc_AW,HSbook}. We now briefly review the features of the model
that are relevant for the analysis presented in this section.
We assume the standard asynchronous shared-memory model where
$n+1$ processes communicate by reading and writing
to single-writer/multi-reader, shared variables, and
any number of processes can crash.
A (global) {\em state}~$x$
of the system is a tuple specifying a local state~$x_i$ for every
process~$i$,
and the state of the environment, which in this case defines the contents of shared variables.
A process performs an action only when it is scheduled to move.
This action is either  a \rd~of a shared variable (belonging to it or to some other process),
or a \wrt\ to one of its own variables. 

A run of a given deterministic protocol~$D$ in this model starts in one of the possible initial states,
 every pair of consecutive states are related according
to the operations that take place as scheduled by the environment.
If in addition the first $N$  \rd\ and \wrt\ actions issued are eventually serviced
appropriately by the environment, 
then we say that the run is  {\em $N$-admissible.}
Let $\cS(D)$ be the system consisting of the set of all $N$-admissible
runs of~$D$. 
Intuitively, we consider executions where each  process  is scheduled to move exactly the same  number of times, $N$, to study task
solvability. If a task is solvable, it is solvable in a finite number of operations, $N$. Alternatively, when a task is not solvable,
there is no $N$ for which it is solvable.
In a shared memory model performing a joint action at a state results in
a unique next state and assuming the environment serves immediately every read and write operation,
every run of a deterministic protocol~$D$ can be represented in the form
$x\odot sc_1\odot sc_2\odot\cdots$ where $x$ is
an initial state and $sc_i$ is a scheduling action,
for every integer $i\ge 0$. That is, it is sufficient to state which is the set of processes that are scheduled to move
in a state, to determine the action of the environment, the action of the processes, and what will be the next state.

A {\em model of distributed computation}
is determined by sets~$L_i$, $i\in\{ e,0,1,\ldots,n\}$, of local states
for the processes and the environment, and corresponding sets of actions
$\act_i$, for every~$i\in\{ e,1,2,\ldots,n\}$,
and by a tuple $M= (\Gz,P_e,\tau,\Psi)$, where the following hold:
\begin{itemize}
\item $\Gz\subseteq\cG$ is called the set of {\em initial states}.
When we focus on a
particular problem such as consensus,
$\Gz$ is the set of binary input states.
\item $P_e$ is a (nondeterministic) protocol for the environment.
\item $\tau$ is a transition function.
\item $\Psi$ is a set of {\em admissible} runs over~$\cG$ and~$\act_e$.
\end{itemize}

A run~$R= (r,\alpha)$ is a {\it run of the protocol
$D= (D_1,\ldots,D_n)$ in}
$M= (\Gz,P_e,\tau,\Psi)$ when
\begin{itemize}
\item[(i)] $r(0)\in\Gz$, so that~$R$ begins in a legal initial state
according to~$M$,
\item[(ii)] $\alpha(k)\in P_e(r(k)_e)$ for all~$k$,
\item[(iii)] $r(k+1)= \tau(r(k),(\alpha(k),\ba^k))$ holds for all
$k\ge 0$,
where the domain of $\ba^k$ is the set $\sched$ in~$\alpha(k)$,
and $\ba^k_i=  D_i(r(k)_i)$ for every~$i\in\sched$,%
\footnote{Given this choice, any deviations of a process from the
protocol, as may happen in a model with malicious failures, will need to
be modeled as resulting from the environment's actions.
The behavior of faulty processes in such a case will be
controlled by the environment.}
and
\item[(iv)] $R\in\Psi$, so that~$R$ is admissible.
\end{itemize}
Condition (ii) implies that the environment's action at every state
of~$R$
is legal according to its protocol~$P_e$,
and condition (iii) states that the state transitions in~$R$ are
according
to the transition function~$\tau$, assuming that the joint action is
the one determined by the environment's action and the actions that the
protocol~$D$ specifies for the processes that are scheduled to move.
A run satisfying  (i)--(iii) but not necessarily the
admissibility
condition (iv) is  {\em a run of~$D$ consistent with~$M$}.

The notions of models and protocols give us a way of focusing on a
special
class of systems, resulting from the execution of a given protocol in
a particular model.
We denote by $\cS_N(D,M,I)$ the system $\cR$,
where $\cR$ is the set of all executions of protocol~$D$ in the model~$M$
that start in initial states from a set~$I$, where $I\subseteq \Gz$, and  each process
executes exactly $N$ operations.
When (any of) $D,M$ or $N$ are clear from the context, we may simply write $\cS(I)$.
}

\section{Distributed computability in terms of Kripke frames}
\label{sec:kripkeModel}

\label{sec:modeKripkeF}
The distributed computing modeling in Section~\ref{subsec:model} is based on executions.
Here we move to the orthogonal perspective, of considering sets of states.

\subsection{Kripke frames}

We define three Kripke frames, one for the input states, one for the protocol states after $N$ steps,
 and one for the output states. Also, two types of relations between Kripke frames: 
  morphism (that will later on correspond to simplicial map) and carrier morphism (that will correspond to carrier map).

In this paper a state of a Kripke frame consists of the local states of the processes (sometimes including the state of the environment,
but usually not), and agents correspond to processes in a distributed model. Whenever  the states of a Kripke frame
have this meaning, we use the following accessibility relation.
Two (global) states $u, v \in S$ are defined to be \emph{indistinguishable} by $a$, $u \sim_a v$, 
if and only if the state of process $a$ is the
same in $u$ and in $v$.
 Notice that $u \sim_a v$ defined this way is indeed an equivalence relation.
 
\begin{definition}[Kripke frame]
 \label{def:kripkeF}
Assume
a set $A = \{a_0, a_1, \ldots a_n\}$ of $n+1$ agents.
A \emph{Kripke frame} $M = \la S,\sim^A \ra$ consists of a finite set of \emph{states}, $S$,
and a function 
 $\sim^A$, yielding for every $a \in A$ an
equivalence \emph{$a$-accessibility relation} $\sim_a\subseteq S \times S$.
\end{definition}
Intuitively, two states in a Kripke structure are connected with an edge labeled
with agent $a$ if such states are ``indistinguishable'' by $a$.
We sometimes view $M$ as a graph whose edges are labeled with the agents that do not distinguish between the
two states of the edge (implicitly, every vertex has a self-loop labeled with all agents).
We will be interested in \emph{proper} Kripke frames (also known as frames
satisfying the identity intersection property in \cite{Porter}), where any two states are distinguishable by at least one agent.

 \begin{definition} [Morphism of Kripke frames]
\label{def:weakMorph}
Let $M=\la S, \sim^A \ra$ and $N=\la T,\sim^A \ra$ be two Kripke frames. A \emph{ morphism} of Kripke
frame  $M$ to $N$ is a function $f$ from $S$ to $T$ such that for all $u, v \in S$, for all $a \in A$, 
$u \sim_a v$ implies $f(u) \sim_a f(v)$. 
\end{definition}

These morphisms are known also as weak morphisms, see e.g. \cite{Porter}
as a more classical notion of morphism is generally the one of $p$-morphism (or bounded morphism, see e.g. \cite{Porter} again), which we will
not be using in the paper. 

We call ${\cal K}$ the category of Kripke frames, with morphisms of Kripke frames as
defined above. Note that this category enjoys many interesting properties, among
which the fact that cartesian products exist (Lemma~\ref{lem:Kprod})~: 

\begin{definition}
\label{def:Kprod}
Let $M=\la S, \sim^A \ra$ and $N=\la T,\sim^A \ra$ be two Kripke frames, and
define $\la U, \sim^A\ra$ as follows~: states $U$ are pairs $u=(s,t)$ of states
$s\in S$ and $t\in T$ and the accessibility relation is defined as
$(s,t) \sim_a (s',t')$ if and only if $s\sim_a s'$ and $t \sim_a t'$. 
We call $U$ the product Kripke frame. 
\end{definition}

Now this product is indeed the categorical product~: 

\begin{lemma}
\label{lem:Kprod}
The Kripke
frame defined in Definition \ref{def:Kprod} 
is the cartesian product, denoted by $M\times N$, 
in the categorical sense, of $M$ with $N$, coming with projections $\pi_M : M\times N
\rightarrow M$ and $\pi_N : M\times N \rightarrow N$, which are morphisms of Kripke frames.
\end{lemma}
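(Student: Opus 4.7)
The plan is to verify the two things a categorical product must satisfy: (a) the projections are morphisms in $\cal K$, and (b) the universal property holds, giving a unique mediating morphism for every cone.

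First, I would check that $\pi_M : M \times N \to M$ sending $(s,t) \mapsto s$ (and similarly $\pi_N$) is a morphism of Kripke frames. Given $(s,t) \sim_a (s',t')$, which by definition means $s \sim_a s'$ in $M$ and $t \sim_a t'$ in $N$, we immediately get $\pi_M(s,t) = s \sim_a s' = \pi_M(s',t')$, so the relation is preserved. The same argument works for $\pi_N$. This is essentially immediate from the coordinate-wise definition of the accessibility relation on $U$.

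Next, I would establish the universal property. Let $Z = \la W, \sim^A \ra$ be any Kripke frame together with morphisms $f: Z \to M$ and $g: Z \to N$. Define $h: Z \to M \times N$ by $h(w) = (f(w), g(w))$. Uniqueness of $h$ is forced: any $h'$ satisfying $\pi_M \circ h' = f$ and $\pi_N \circ h' = g$ must send $w$ to $(f(w), g(w))$. For existence, I need to check that $h$ is itself a morphism of Kripke frames. If $w \sim_a w'$ in $Z$, then since $f$ and $g$ are morphisms we get $f(w) \sim_a f(w')$ in $M$ and $g(w) \sim_a g(w')$ in $N$; by the coordinate-wise definition this is exactly $h(w) \sim_a h(w')$ in $M \times N$. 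The commutativity $\pi_M \circ h = f$ and $\pi_N \circ h = g$ holds by construction.

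Finally, I should remark that the accessibility relation on $U$ is indeed an equivalence relation, as required by Definition \ref{def:kripkeF}: reflexivity, symmetry and transitivity are all inherited coordinate-wise from $\sim_a$ on $S$ and $T$, so $M \times N$ is a legitimate object of $\cal K$. There is no significant obstacle here — the argument is a direct transcription of the standard product-in-$\mathbf{Set}$ construction, with the extra observation that the coordinate-wise definition of $\sim^A$ makes the projections and the pairing map automatically compatible with the accessibility relations.
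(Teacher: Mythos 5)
Your proof is correct and complete: the paper actually states Lemma~\ref{lem:Kprod} without any proof, treating it as a routine verification, and what you have written is exactly the standard argument one would supply --- projections preserve the coordinate-wise accessibility relation, the pairing map $h(w)=(f(w),g(w))$ is the unique mediating morphism and is itself a morphism because $\sim_a$ on the product is defined coordinate-wise, and the product relation inherits reflexivity, symmetry and transitivity from its factors (one could also note that the product of two finite state sets is finite, so the product is a legitimate object of $\cal K$). Nothing is missing.
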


Let $M=\la S, \sim^A \ra$ and $N=\la T,\sim^A \ra$ be two Kripke frames. A \emph{carrier morphism} of Kripke
frame  $M$ to $N$ is a relation $\Phi$   from $S$ to $T$ such that for every  $s\in S$,  $\Phi(s)\subseteq T$, and
for all $u, v \in S$, 
 there exist $u'\in f(u),v'\in f(v)$, with $u' \sim_a v'$, for each $a\in A$ for which $u\sim_a v$.
Notice that  $\Phi(s)$ is non-empty, for each $s\in S$.

It easy is to check that both  morphism and carrier morphism compose,  
in particular we use the following,
where  $\cI=\la \Gz, \sim^A \ra$,  $\cP=\la \Gp,\sim^A \ra$ and $\cO=\la \Gf,\sim^A \ra$
are  arbitrary Kripke frames.

Let $M=\la S, \sim^A \ra$ and $N=\la T,\sim^A \ra$ be two Kripke frames, 
 $f$ a  morphism,  and $\Phi$ a carrier morphism from $S$ to $T$.
 We say $f$ \emph{respects} $\Phi$ if $f(s) \in \Phi(s)$, for every  $s\in S$.
 Similarly, $\Phi$ \emph{respects} a carrier morphism $\Delta$ from $S$ to $T$ if 
  $\Phi(s)\subseteq \Delta(s)$, for every $s\in S$.
When $f$ {respects} $\Phi$ we may also say that $f$ \emph{is carried} by $\Phi$.

\subsection{Task solvability}
\label{sec:tasksolvability}

In the following definition of a task, the input states $\Gz$, are such that in every initial state, the environment is
in the same state, say $\epsilon$. Thus the inputs are encoded in the initial  states of the processes only (and
the shared memory does not contain inputs to be used by the processes), and we may as well disregard the environment
state from any initial state. Similarly,  there is a set of output states, $\Gf$, with no environment state (or equal to $\epsilon$),
that represent what each process should produce as output value in an execution.
The relation $\sim^A$  consists of the indistinguishability relations defined above.
 
\begin{definition} [Task]\label{def:taskK}
A \emph{task} is a triple $\cT=\la \Gz,\Gf,\Delta \ra$. 
The set of initial initial states $\Gz$, is such that in every initial state, the environment is
in the same state, and  $\Gf$ is a set of states, with the same environment state.
Then, $\Delta$ is a carrier morphism of Kripke
frame $I=\la \Gz, \sim^A \ra$ to Kripke frame $O=\la \Gf,\sim^A \ra$.
\end{definition}

Note that we can view a task as in the definition above as a sub-Kripke frame $Z$ of 
the product frame $\Gz \times \Gf$, by just taking the states of $\Gz \times \Gf$
that are related by carrier map $\Delta$, within $\Gz \times \Gf$, with the induced
accessibility relations. 
In fact, $\Delta$, as a carrier
map, of some state $s \in \Gz$ 
is just the image by the second projection $\pi_{\Gf}$ of the sub-frame of $\Gz$ 
generated by states $g$ such that $\pi_{\Gz}(g)=s$. 
In the sequel we identify $\Delta$ with $Z$, for simplicity's
sake. 

  Consider a {model of distributed computation}
 $M= (\Gz,P_e,\tau,\Psi)$,  with initial states $\Gz\subseteq\cG$,  a deterministic protocol $D$, and fix an integer $N$.
 Recall that  $\cS_N(D,M, \Gz)$  contains the $N$-admissible executions, where each process executes the same number
 of actions.

The $N$-\emph{protocol Kripke frame}  $P=\la S, \sim^A \ra$ consists of  states $S$, where  $s\in S$, is obtained by removing the environment state
from the last state in every execution in this system $\cS_N(D,M, \Gz)$.

\begin{lemma}
\label{lem:compC}
Let $I=\la \Gz, \sim^A \ra$ be the input Kripke frame and $P=\la S, \sim^A \ra$ be the $N$ step protocol Kripke frame.
Define $\cP$ to be the relation from $\Gz$ to $S$ sending each state $s\in \Gz$ to the states at the end of executions
with initial state $s$. Then $\cP$ is a carrier morphism from $I$ to $P$.
\end{lemma}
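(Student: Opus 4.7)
The plan is to verify the two clauses of the definition of carrier morphism: that $\cP(s)$ is non-empty for every $s\in\Gz$, and that for every pair $u,v\in\Gz$ there exist $u'\in\cP(u)$ and $v'\in\cP(v)$ with $u'\sim_a v'$ for each $a\in A$ satisfying $u\sim_a v$.

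Non-emptiness is immediate, since the models of Section~\ref{sec:modelsEx} admit at least one $N$-admissible schedule of $D$ from any initial state (for instance the fully sequential one in which each process, one after the other, executes all $N$ of its operations), and its final state belongs to $\cP(s)$. For the second clause, fix $u,v\in\Gz$ and let $B=\{a\in A:u\sim_a v\}$. By definition of the indistinguishability relation, every $a\in B$ has the same local state in $u$ and in $v$, and the environment part of $u$ and $v$ is the common initial shared-memory value $\epsilon$. The key construction is to apply the \emph{same} schedule $\sigma$ to both $u$ and $v$, designed so that the processes in $B$ complete all $N$ of their operations \emph{before} any process outside $B$ takes a step; the remaining processes then execute in any legal order, completing $\sigma$. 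Set $u'=u\odot\sigma$ and $v'=v\odot\sigma$, so that $u'\in\cP(u)$ and $v'\in\cP(v)$.

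It then remains to argue that $u'_a = v'_a$, hence $u'\sim_a v'$, for every $a\in B$. This is established by a straightforward induction over the operations of the $B$-prefix of $\sigma$: a process $a\in B$ either writes to its own register or reads a register whose current contents are either the initial shared value (the same in $u$ and $v$) or a value previously written by some $b\in B$ whose trajectory coincides in the two runs by the induction hypothesis; since $D$ is deterministic, $a$ follows an identical state sequence in both executions. The subsequent operations of the non-$B$ processes come after the $B$-prefix and cannot affect the (already finalized) local states of the $B$-processes. The one delicate point to verify is that such a ``$B$-first'' schedule is admissible in $M$; this is guaranteed by the wait-free structure of all the models of Section~\ref{sec:modelsEx}, where any per-process allocation of exactly $N$ operations respecting the (iterated) concurrency-class or block-action constraints is admissible, and in particular the $B$-first schedule above qualifies.
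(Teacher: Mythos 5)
The paper states Lemma~\ref{lem:compC} without proof, so there is no argument of the authors to compare against; your proposal supplies the missing argument, and it is essentially correct and is clearly the one the paper intends (it is the same indistinguishability reasoning that underlies the paper's later discussion of solo executions and of $P$ as a product of $\Gz$ with an action frame). Your two steps --- non-emptiness of $\cP(s)$, and the choice of a single schedule under which the group $B=\{a : u\sim_a v\}$ runs in isolation followed by a determinism induction showing each $a\in B$ traverses identical local states in both executions --- are exactly what is needed, and you correctly noticed that the second clause of the carrier-morphism definition requires one pair $(u',v')$ working simultaneously for all $a\in B$, which your common schedule delivers. One detail to tighten: in the iterated model IIS a block action is by definition an ordered partition of all of $A$, so a schedule in which $B$ literally completes all $N$ of its operations before any process outside $B$ moves is not a legal schedule there; the correct variant is to place the processes of $B$ in the leading concurrency classes of \emph{every} round, which has the same effect because round $r$ uses a fresh array $\mathsf{mem}_r$ and hence the $B$-processes still read only initial values and values written by members of $B$. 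Your closing sentence about respecting the iterated constraints gestures at this, but the construction as literally stated applies only to the non-iterated models; with that adjustment the proof covers all the models of Section~\ref{sec:modelsEx}.
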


 \begin{definition} [Solving a Task]
\label{def:solvingTaskK}
The protocol $D$ with carrier morphism $\cP$
solves task  $\cT=\la \Gz,\Gf,\Delta \ra$ in $N$ steps if
there exists  a  morphism, $f$ from $P=\la \Gp, \sim^A \ra$ to  $O=\la \Gf,\sim^A \ra$ such that the composition of  $\cP$
and $f$ is a carrier morphism
respecting $\Delta$.
\end{definition}

Let us discuss this definition.
First, assume the protocol $D$ with carrier morphism $\cP$
solves task  $\cT=\la \Gz,\Gf,\Delta \ra$ in $N$ steps 
with  a  morphism, $f$ from $P=\la Gp, \sim^A \ra$ to  $O=\la Gf,\sim^A \ra$. 
If $f$ send a state $s$ to an output state $f(s)$ then indeed each process $a$ can decide (operationally, in its program)
the value that corresponds
to its local state in $f(s)$: (i) the value is a function only of its local state, and (ii) it is consistent in the sense that
if $a$ has the same local state in two states $u, v$ then the decision of $a$ are required to be the same by $f$.

Second, these decisions are respecting the task specification, because if we consider an initial state $s_0$, then any execution
starting in $s_0$ ends in a state in $s\in \cP(s_0)$, which is then mapped to a state $t$ by $f(s)$, with $t\in \Delta(s_0)$.

Finally, if such a carrier morphism  $f$ does not exist, then it is impossible to solve the task in $N$ steps by a deterministic
protocol $D$ in model $M$, because any such protocol would actually be defining a carrier morphism, because the
decision of a protocol are based only on the local states of the processes after $N$ steps.

\subsection{Carrier maps as products}
\label{sec:carrierprod}

 Fix a {model of distributed computation}
 $M= (\Gz,P_e,\tau,\Psi)$,  with initial states $\Gz\subseteq\cG$, and a deterministic protocol $D$, with some integer $N$.
 The set  $\cS_N(D,M, \Gz)$  of $N$-admissible executions, where each process executes the same number
 of actions can be seen as a product of $\Gz$ and a set of composed scheduling actions $R$ 
where each composed
 block action is of the form $sc_1\odot sc_2\odot\cdots\odot sc_R$.
If $IIS_R$ denotes the set  composition of block actions,
$
sc_1\odot sc_2\odot\cdots\odot sc_R \in IIS_R
$ 
 we may define the product 
$
S= \Gz \times  IIS_R$
to be equal to the set of states at the end of all executions starting in a state in $\Gz$ after applying the actions of
an element in $IIS_R$. Given that this is a set $S$ of (global) states, we have the induced indistinguishability relation $\sim^A$,
and a Kripke frame, $P=\la Gp, \sim^A \ra$. 
This is of course
 the protocol Kripke frame 
 which are the states at the
end of all $N$-admissible executions starting in $\Gz$.

More generally, for model $M= (\Gz,P_e,\tau,\Psi)$,   protocol $D$ and $N$, we can consider the set
of all possible composed scheduling actions of the environment $AC$, each one determined by the order
in which the environment schedules the processes. Then the protocol Kripke frame 
can be written as $P=\la S, \sim^A \ra$, where $S= \Gz \times  AC$ which is indeed the cartesian
product defined in Lemma \ref{lem:Kprod} between frame $\Gz$ and the action frame
$AC$. Because the protocol is defined as a product, we have the following simplification
for task solvability~: 

\begin{theorem}
\label{thm:Kripketasksolv}
In the category of Kripke frames, task solvability in the sense of Definition
\ref{def:solvingTaskK} is equivalent to the
existence of a morphism $h : P \rightarrow \Delta$ such that $\pi_I \circ h$ is equal to $\pi_I : P \rightarrow
I$, i.e. the following diagram commutes~: 
\begin{center}
\begin{tikzpicture}[scale=0.6, every node/.style={scale=0.6}]
  \node (s) {$P$};
  \node (xy) [below=2 of s] {$\Delta \subseteq I \times O$};
  \node (x) [left=of xy] {$I$};
  \draw[<-] (x) to node [sloped, above] {$\pi_I$} (s);
  \draw[->, dashed, left] (s) to node {$h$} (xy);
  \draw[->] (xy) to node [below] {$\pi_I$} (x);
\end{tikzpicture}
\end{center}
\end{theorem}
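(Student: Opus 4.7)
The plan is to prove the equivalence by exhibiting mutually inverse translations between the two formulations. Throughout I use the identification of $\Delta$ with the sub-Kripke frame of $I \times O$ recorded just after Definition~\ref{def:taskK}, and the product decomposition $P = I \times AC$ from Section~\ref{sec:carrierprod}; this makes $\pi_I : P \to I$ a morphism of Kripke frames by Lemma~\ref{lem:Kprod} and ensures that, for each $s \in \Gz$, the fiber $\pi_I^{-1}(s)$ coincides with $\cP(s)$, the set of end-states of executions starting in $s$.

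For the direction from Definition~\ref{def:solvingTaskK} to the commutative diagram, suppose $f : P \to O$ is a morphism whose composition with $\cP$ respects $\Delta$. I would define $h : P \to I \times O$ by $h(p) = (\pi_I(p), f(p))$. Since $\pi_I$ and $f$ are morphisms and the accessibility on the product is componentwise (Definition~\ref{def:Kprod}), $h$ is a morphism; and $\pi_I \circ h = \pi_I$ holds by construction. To see that $h(p)$ actually lies in $\Delta$, set $s = \pi_I(p)$, so $p \in \cP(s)$ and hence $f(p) \in (f \circ \cP)(s) \subseteq \Delta(s)$, so $(s, f(p)) \in \Delta$.

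For the converse, given $h : P \to \Delta$ with $\pi_I \circ h = \pi_I$, I would set $f = \pi_O \circ h$, a morphism as a composition of morphisms. To see that $f \circ \cP$ respects $\Delta$, fix $s \in \Gz$ and $p \in \cP(s)$: then $\pi_I(p) = s$, and the equation $\pi_I(h(p)) = s$ together with $h(p) \in \Delta \subseteq I \times O$ forces $h(p) = (s, f(p))$, whence $f(p) \in \Delta(s)$. That $f \circ \cP$ is itself a carrier morphism follows by transporting the witnesses provided by Lemma~\ref{lem:compC} through $f$: if $s \sim_a s'$ in $I$, pick $p \in \cP(s)$ and $p' \in \cP(s')$ with $p \sim_a p'$ in $P$; then $f(p) \sim_a f(p')$ since $f$ is a morphism, yielding the required witnesses in $(f \circ \cP)(s)$ and $(f \circ \cP)(s')$.

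The only subtle point is the identification $\cP(s) = \pi_I^{-1}(s)$, i.e.\ that the protocol carrier morphism defined pointwise from executions coincides with the fiber of the first projection of the categorical product described in Section~\ref{sec:carrierprod}. Once this is granted, the two constructions $f \mapsto h$ and $h \mapsto f$ above are evidently inverse to each other (since $f$ is recovered as $\pi_O \circ h$ and $h$ as $(\pi_I, f)$), and this simultaneously establishes both implications of the theorem.
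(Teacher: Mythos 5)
Your proposal is correct and follows essentially the same route as the paper: both identify $\Delta$ with the sub-frame of $I\times O$, use the product structure $P = I\times AC$, and obtain $h$ from $f$ via the universal property of the cartesian product (which you instantiate concretely as the pairing $h=(\pi_I,f)$, recovering $f=\pi_O\circ h$ in the other direction). Your write-up is in fact more explicit than the paper's, which argues at the level of diagrams and sections; your pointwise verification that $h(p)\in\Delta$ and that $f\circ\cP$ is a carrier morphism respecting $\Delta$ fills in details the paper leaves implicit.
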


\begin{proof}
Suppose we have task solvability in the sense of Definition \ref{def:solvingTaskK}. 
Hence we have a morphism $f$ and carrier maps (identified with thick arrows below)
$\cP$ such that the following diagram commutes ``up to inclusion''~: F
\begin{center}
\begin{tikzpicture}
  \node (s) {$P$};
  \node (x) [left=of xy] {$I$};
  \node (y) [right=of xy] {$O$};
  \draw[->] (s) to node [sloped, above] {$f$} (y);
  \draw[thick,->] (x) to node [sloped, above] {$\cP$} (s);
\draw[thick,->] (x) to node [below] {$\Delta$} (y);
\end{tikzpicture}
\end{center}
First, we note, as observed after Definition \ref{def:taskK} that the carrier
map below is a composite of the inverse image of $\pi_{\Gz}$ with $\pi_{\Gf}$. Similarly,
as $P$ is a product frame, the carrier map $\cP$ is induced by the structural
map $\pi_{\Gz}: P \rightarrow \Gz$. Hence task solvability is equivalent to the
commutation of the diagram below, where $p$ is any section of the projection 
$\pi_{\Gz}: \Delta \rightarrow \Gz$, and $q$ is any section of the projection
$\pi_{\Gz}: P \rightarrow \Gz$~: 
\begin{center}
\begin{tikzpicture}
  \node (s) {$P$};
  \node (xy) [below=2 of s] {$\Delta \subseteq I \times O$};
  \node (x) [left=of xy] {$\Gz$};
  \node (y) [right=of xy] {$\Gf$};
  \draw[->] (s) to node [sloped, above] {$f$} (y);
 \draw[->] (xy) to node [sloped, above] {$\pi_{\Gf}$} (y);
\draw[->] (x) to node [sloped, above] {$p$} (xy);
\draw[bend left,<-,dashed] (x) to node [sloped, above] {$\pi_{\Gz}$} (xy);
\draw[bend left,dashed,<-] (x) to node [sloped,above] {$\pi_{\Gz}$} (s);
\draw[->] (x) to node [sloped,below] {$q$} (s);
\end{tikzpicture}
\end{center}

Note that by the universal property of cartesian products, this is equivalent
to the existence of a weak morphism $h$ from $P(I)$ to $I \times O$, which 
furthermore has to be in values in $\Delta \subseteq I\times O$, such that the
following diagram commutes~: 
\begin{center}
\begin{tikzpicture}
  \node (s) {$P$};
 \node (xy) [below=2 of s] {$\Delta \subseteq I \times O$};
  \node (x) [left=of xy] {$I$};
  \node (y) [right=of xy] {$O$};
  \draw[->] (s) to node [sloped, above] {$f$} (y);
\draw[<-] (x) to node [sloped, above] {$\pi_{\Gz}$} (s);
  \draw[->, dashed, left] (s) to node {$h$} (xy);
  \draw[->] (xy) to node [below] {$\pi_{\Gz}$} (x);
  \draw[->] (xy) to node [below] {$\pi_{\Gf}$} (y);
\end{tikzpicture}
\end{center}
Hence the result.
\end{proof}

\section{A dynamic epistemic logic semantics for distribu\-ted computing}

\label{sec:dynEpSemantics}
We consider a Kripke model that represents the possible initial states in the model,
and use dynamic epistemic logic, to represent the environment as an action model transforming the initial
model into Kripke model representing the epistemic state at the end of $N$ steps of protocol $D$.
Furthermore, we express desired epistemic states to be reached, by  specifying a task also as an action model.

\subsection{Dynamic epistemic logic}
We adhere to the notation of~\cite{DEL:2007}.
Let $\I{AP}$ be a countable set of \emph{atomic propositions} (i.e.,
\emph{propositional variables}).
The set of {\em literals} over $\I{AP}$ is
$\Lit(\I{AP})= \I{AP} \cup \{\neg p \mid p \in \I{AP}\}$.
The {\em complement} of a literal $p$ is defined by
$\overline{p} = \neg p$ and $\overline{\neg p} = p$, $\forall p \in \I{AP}$.
If $X \subseteq \Lit(\I{AP})$, then
$\overline{X}= \{ \overline\ell \mid \ell \in X \}$; 
$X$ is {\em consistent} iff 
$\forall$ $\ell \in X$, $\overline{\ell} \notin X$;
and $X$ is {\em $\I{AP}$-maximal}  iff
$\forall$ $p \in \I{AP}$, either $p \in X$ or $\neg p \in X$.

\begin{example}
\label{exap}
Consider distributed protocols that manipulate boolean values, i.e. 
where for all $i \in \{0,\ldots,n\}$, 
the local values that agents $a_i$ can take are ${L}_i=\{0, 1\}$. 
$\I{AP}$ is the set $\{l_i \ \mid \ i=0,\ldots,n\}$ of predicates whose informal
interpretation is $l_i$ is true when agent $a_i$ holds local value $1$ (otherwise,
it holds 0). 
\end{example}

\begin{definition}[syntax] 
Let $\I{AP}$ be a countable set of propositional variables and $A$ a set of agents.
The language $\mathcal{L}_K$ is generated by the following BNF
grammar:
\[
\varphi ::= p \mid \neg\varphi \mid (\varphi \land \varphi) \mid
K_a\varphi
\]
\end{definition}

\begin{definition}[Semantics of formulas]
Epistemic state $(M,s)$ with $M=\la S,\sim,V \ra$, and
$\varphi \in \Lcal_K(A,\I{AP})$.
The satisfaction relation, determining when a formula is true in an
epistemic state, is defined as:

\begin{tabular}{lrl}
$M,s \models p$ & iff & $p \in L(s)$\\
$M,s \models \neg \varphi$ & iff & $M,s \not\models \varphi$\\
$M,s \models \varphi \wedge \psi$ & iff & $M,s \models \varphi
\mbox{ and } M,s \models \psi$\\
$M,s \models K_a \varphi$ & iff & $\mbox{for all } s' \in S : s
\sim_a s' \mbox{ implies } M,s' \models \varphi$\\
\end{tabular}
\end{definition}

shorthand:
\[
E_B \varphi = \bigwedge_{b\in B} K_b \varphi
\]

Hence, 
an agent $a$ is said to know an assertion 
in a state $(M,s)$
iff that assertion of true in all the states it considers possible,
given $s$. 
Therefore, the \emph{knowledge} $K_a$ of an agent $a$ with respect to
a state $s$ is the set of formulas which are true in all states
$a$-accessible from $s$.

\begin{example}
\label{ex:kripkemodel}
Consider  boolean tasks where both input and output values to the processes are taken from the set $\{0,1\}$ as in Example \ref{exap} (and we use the same notations as
in this example).
Consider now the following Kripke model, where the squiggly arrows picture
the map $L$, associating to each state of the underlying Kripke frame, AP-maximal sets of literals~:
\begin{center}
\begin{tikzpicture}[auto,line/.style={draw,thick,-latex',shorten >=2pt},decoration=snake]
\matrix[column sep=2mm,row sep=2mm]
{
\node (alpha) {$\alpha$}; & \node (beta) {$\beta$}; &
\node (gamma) {$\gamma$}; \\
\node (Lalpha) {$\scriptscriptstyle \neg l_0,\neg l_1$}; & \node (Lbeta) {$\scriptscriptstyle l_0,\neg l_1$}; &
\node (Lgamma) {$\scriptscriptstyle l_0,l_1$}; \\
};
\draw (alpha) -- node[above] {$\scriptscriptstyle a_1$} (beta);
\draw (beta) -- node[above] {$\scriptscriptstyle a_0$} (gamma);
\draw[->,dashed,decorate] (alpha) -- (Lalpha);
\draw[->,dashed,decorate] (beta) -- (Lbeta);
\draw[->,dashed,decorate] (gamma) -- (Lgamma);
\end{tikzpicture}
\end{center}
We have now the following knowledge, known to agents~: 
\begin{itemize}
\item In $\alpha$ : $K_0 \neg l_1$ and $K_0 \neg l_0$ ($a_0$ knows both its value, which is 1, and the
value of $a_1$, which is 1) ; $\neg K_1 \neg l_0$ and $K_1 \neg l_1$ ($a_1$ does not know if the value of
$a_0$ is 1, but knows its own value, which is 1)
\item In $\beta$ : $\neg K_0 \neg l_1$ and $K_0 l_0$ ($a_0$ knows its value, which is 0, but does not know
if the value of $a_1$ is 1) ; $\neg K_1 l_0$ and $K_1 \neg l_1$ ($a_1$ does not know if the value of $a_0$
is 1 but it knows its value is 0)
\item In $\gamma$ : $\neg K_0 \neg l_1$ and $K_0 l_0$ ($a_0$ knows its value, which is 0, but does not
know if $a_1$ has value 0) ; $K_1 l_0$ and $K_1 l_1$ ($a_1$ knows both its value, which is 1, and the
value of $a_0$ which is 1)
\end{itemize}
Knowledge is sparse though : one iteration of the knowledge semantics shows that in states $\alpha$
and $\beta$, $\neg K_1 (K_0 \neg l_1)$, so that $a_1$ does not know if $a_0$ knows the value of $a_1$ in
these two states (which is 1), and in $\beta$ and $\gamma$, $\neg K_0(K_1 l_0)$, that is, $a_0$ does
not know if $a_1$ knows the value of $a_0$ in these two states (which is 0). One could guess that even
starting with this restricted set of states, no agent can decide, coherently, to reach consensus. We will come back to this in more generality in Section \ref{commonknowledge}. 
\end{example}

\begin{definition}[Language of action model logic]
\emph{
$\varphi \in \Lcal^{\RM{stat}}_{\I{KC}\otimes}(A,P)$ and
$\alpha \in \Lcal^{\RM{act}}_{\I{KC}\otimes}(A,P)$ are defined by:
\begin{align*}
\varphi & ::= p \mid (\neg\varphi) \mid (\varphi \wedge \varphi) \mid
K_a \varphi \mid [\alpha]\varphi\\
\alpha & ::= (\Msf,\ssf) 
\end{align*}
where $p \in P$, $a \in A$.
}
\end{definition}

\begin{definition}[Semantics of formulas and actions]
Epistemic state $(M,s)$ with $M=\la S,\sim,L \ra$,
action model $\Msf = \la \Ssf,\sim,\pre \ra$, and
$\varphi \in \Lcal^{\RM{stat}}_{\I{KC}\otimes}(A,P)$ and
$\alpha \in \Lcal^{\RM{act}}_{\I{KC}\otimes}(A,P)$.

\begin{tabular}{lrl}
$M,s \models p$ & iff & $p \in L(s)$\\
$M,s \models \neg \varphi$ & iff & $M,s \not\models \varphi$\\
$M,s \models \varphi \wedge \psi$ & iff & $M,s \models \varphi
\mbox{ and } M,s \models \psi$\\
$M,s \models K_a \varphi$ & iff & $\mbox{for all } s' \in S : s
\sim_a s' \mbox{ implies } M,s' \models \varphi$\\
$M,s \models [\alpha]\varphi$ & iff & $\mbox{for all } M,\Msf,s,\ssf :
M,s \models \pre(\ssf) \mbox{ implies }$\\
& & $(M \otimes \Msf,(s,\ssf)) \models \varphi$ \\
\end{tabular}


The \emph{restricted modal product} $M \otimes \Msf = \la S',\sim',L'
\ra$ is defined as:

\begin{tabular}{lrl}
$S'$ & $=$ & $\{(s,\ssf) \mid s \in S, \ssf \in \Ssf, \mbox{ and } M,s
  \models \pre(\ssf)\}$\\
$(s,\ssf) \sim_a' (t,\tsf)$ & iff & $s \sim_a t \mbox{ and } \ssf \sim_a
  \tsf$\\
$p \in L'(s,\ssf)$ & iff & $p \in L(s)$
\end{tabular}

\end{definition}

\comment{
\subsection{The dynamics of $n$-agent systems}

\paragraph{Public announcements.}
In public announcement logic~\cite[Chapter 4]{DEL:2007},
the result of making a public announcement is the restriction of the 
epistemic model 
to those states where the announcement is true, and the accessibility
relation is kept between those remaining states.
This operation can be intuitively justified as follows.
We assume that there is a single actual state.
An epistemic model represents the agents' ignorance of which is the
actual state by connecting such a state with other possible states.
We also assume that public announcements are truthful of the actual
state.
Hence, when we update an epistemic model with respect to a public
announcement, we know that at least the actual state should be in the
resulting model.
If the public announcement does not contain enough information to
determine the actual state, such an announcement will also be true of
other states, which will remain as well in such a model.

\paragraph{Action models.}
A number of logics have been developed to generalize
public-announcement update to message transmission which is not necessarily
from a superior being to all agents, but between subsets of agents.
One such logics is the action model logic~\cite[Chapter 6]{DEL:2007}.
Although the intuitive explanation in~\cite{DEL:2007}, employs action
points and preconditions, an alternative explanation can be given in
terms of announcements.

We can think of this more general situation in terms of announcements
as follows.
We can view each possible message as different announcement.

See~\cite[p 76]{DEL:2007} revelation.

%
%
}

\comment{
Let $\I{AP}$ be a countable set of \emph{atomic propositions}.
If $X \subseteq \Lit(\I{AP})$, then $X$ is {\em $\I{AP}$-maximal}  iff
$\forall$ $p \in \I{AP}$, either $p \in X$ or $\neg p \in X$.
Assume
a set $A = \{a_0, a_1, \ldots a_n\}$ of $n+1$ agents and 
a countable set $\I{AP}$ of propositional variables.
An \emph{epistemic model}  $M = \la S,\sim^A,L^{\I{AP}} \ra$ consists of a Kripke frame
 $\la S, \sim^A\ra$  and a function
$L^{\I{AP}} : S \to 2^{\Lit(\I{AP})}$ 
 such that 
$\forall s \in S$, $L(s)$ is consistent and $\I{AP}$-maximal.
We will often suppress explicit reference to the sets $\I{AP}$ and $A$,
and denote an epistemic model as $M = \la S,\sim,L \ra$.

 The \emph{knowledge} $K_a$ of an agent $a$ with respect to
a state $s$ is the set of formulas which are true in all states
$a$-accessible from $s$.
Notice that atomic propositions can be defined on the states of an epistemic model
that represent both global aspects of the system (e.g. the number of processes in the system)
or information local to an agent (the input of a process is a value from some set).
%
See Example~\ref{ex:kripkemodel}.
}

Now, we can organize Kripke models as a category, by defining Kripke model morphisms~: 

\begin{definition} [Morphism of Kripke models]
\label{def:KModMorph}
Let $M=\la S, \sim^A,L^{\I{AP}} \ra$ and $N=\la T,\sim^A,L^{\I{AP}} \ra$ be two Kripke
models. A morphism of Kripke models is a
morphism $f$ of the underlying Kripke frames of $M$ and $N$ such that 
$  L^{\I{AP}}(f(s)) \subseteq  L^{\I{AP}}(s) $ for all states $s$ in $S$. 
\end{definition}

We note $\cal KM$ for the category of Kripke models with such morphisms. We note that
$\cal KM$ is cartesian as is $\cal K$ the category of Kripke frames, with the
cartesian product being defined as the Kripke model with underlying Kripke frame
being the cartesian product of the underlying Kripke frames and the function
$L^{\I{AP}}$ is defined on the product states $(s,t)$ by $L^{\I{AP}}(s,t)=L^{\I{AP}}(s)
\cup L^{\I{AP}}(t)$. 

We prove now that morphisms can only ``lose knowledge'' (whereas it
is well-known that $p$-morphisms, \cite{sep-dynamic-epistemic} keep knowledge invariant)~: 

\begin{theorem}
\label{prop:prop1}
Consider now two Kripke models  
$M'=\la S',\sim'^A , L'\ra$ and 
$M = \la S,\sim^A, L\ra$, and a 
morphism $f$ from 
$M$ to 
$M'$. 
Then, $f$ ``can only lose knowledge'' for every agent $a$, i.e. for all $a \in A$, 
for all states $s\in M$, $M',f(s) \models K_a \phi \Rightarrow M,s \models K_a \phi$. 
\end{theorem}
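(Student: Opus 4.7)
The approach is structural induction on $\phi$, proving the slightly stronger statement that $M', f(s) \models \phi$ implies $M, s \models \phi$ for every state $s \in S$; the $K_a\phi$ case required by the theorem then drops out as one instance of the inductive step. The two ingredients needed are precisely the two clauses of Definition~\ref{def:KModMorph}: the labelling inclusion $L'(f(s)) \subseteq L(s)$ will carry the base case, and the frame-morphism clause $s \sim_a t \Rightarrow f(s) \sim'_a f(t)$ will carry the modal case.

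For the base case, I would first observe that, since $L$ and $L'$ are consistent and $\I{AP}$-maximal at every state, the inclusion $L'(f(s)) \subseteq L(s)$ is in fact forced to be an equality: any literal $\ell \notin L'(f(s))$ has its complement $\overline{\ell} \in L'(f(s)) \subseteq L(s)$ by maximality, and then $\ell \notin L(s)$ by consistency. Hence atomic propositions evaluate identically at $s$ and at $f(s)$, and conjunction propagates routinely through the induction hypothesis. For the modal step — which is the content the theorem actually asserts — assume $M', f(s) \models K_a \phi$, so that every $u' \sim'_a f(s)$ satisfies $\phi$ in $M'$. For an arbitrary $t \sim_a s$ in $M$, the frame-morphism clause delivers $f(t) \sim'_a f(s)$, hence $M', f(t) \models \phi$; the induction hypothesis applied to $\phi$ at the state $t$ then yields $M, t \models \phi$. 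Since $t$ was an arbitrary $\sim_a$-neighbour of $s$, we conclude $M, s \models K_a \phi$.

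I expect the only genuinely delicate point to be keeping track of the \emph{direction} of the preservation. A morphism of Kripke frames carries only the ``forth'' condition, and lacks the ``back'' clause characteristic of a $p$-morphism (cf.~\cite{sep-dynamic-epistemic}); so $f$ may collapse several distinct states of $M$ into a single state of $M'$, thereby enlarging the $\sim'_a$-equivalence classes in $M'$. The universal modality $K_a$ is consequently an intersection over strictly more alternatives at $f(s)$ than at $s$, and so whatever is known in the coarser model $M'$ is known a fortiori in the finer model $M$. This one-sided asymmetry is the formal expression of ``losing knowledge'' and is precisely what pins down the direction $M' \to M$ in which the preservation must be formulated.
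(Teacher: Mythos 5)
Your modal step is exactly the paper's argument: take $t \sim_a s$, use the forth condition to get $f(t) \sim'_a f(s)$, deduce $M',f(t)\models\phi$ from $M',f(s)\models K_a\phi$, and then transfer $\phi$ back from $f(t)$ to $t$. The difference is in how that last transfer is justified, and this is where your proposal has a genuine gap. You set up a structural induction on the \emph{strengthened} claim ``$M',f(s)\models\phi \Rightarrow M,s\models\phi$ for all $\phi$,'' but the language $\mathcal{L}_K$ contains negation, and the induction cannot be closed at the $\neg$ clause: from $M',f(s)\not\models\phi$ you would need the \emph{converse} implication $M,s\models\phi\Rightarrow M',f(s)\models\phi$ to conclude $M,s\not\models\phi$, and that converse is exactly what a mere (non-$p$-)morphism fails to provide. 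Worse, the strengthened claim is simply false: take $\phi=\neg K_a p$ and a morphism $f$ that merges two $\sim'_a$-related states; then $M',f(s)\models\neg K_a p$ while $M,s\models K_a p$. So the induction as proposed does not go through, even though you correctly omit negation from the list of ``routine'' cases --- that omission is the hole.

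The paper avoids the induction entirely: it argues directly that $M',f(t)\models\phi$ gives $\phi\in L^{\I{AP}}(f(t))\subseteq L^{\I{AP}}(t)$, hence $M,t\models\phi$. This transfer is only literally valid when $\phi$ is a literal (or more generally a formula whose truth at a state is determined by the valuation there), which is the honest scope of the statement; read that way, your base-case observation that consistency plus $\I{AP}$-maximality force $L'(f(s))=L(s)$ is a nice sharpening, and your induction does work on the negation-free (or purely propositional) fragment. To repair your write-up, either restrict $\phi$ in the theorem's conclusion to factual formulas and run the paper's one-line transfer, or restrict the induction to a fragment closed under the one-directional implication; as stated, the blanket claim over all $\phi\in\mathcal{L}_K$ is not established.
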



\begin{proof}
Recall that~:
 
\emph{
\begin{tabular}{lrl}
$M,s \models K_a \varphi$ & iff & $\mbox{for all } s' \in S : s
\sim_a s' \mbox{ implies } M,s' \models \varphi$\\
\end{tabular}
}. 

Consider $t \sim_a s$~:  
$f(t) \sim_a f(s)$ and as $M',f(s) \models K_a \phi$, 
by definition of the semantics of $K_a$ that we recapped above, we know
that $M',f(t) \models \phi$. Therefore $\phi \in L^{AP}(f(t))$, and by definition
of morphisms of Kripke models, $\phi \in L^{AP}(f(t)) \subseteq L^{AP}(t)$. 
So $M,t \models \phi$ and $M,s \models K_a \phi$. 
\end{proof}

Note that knowledge is not  defined in the topological
approach to distributed computing~\cite{HerlihyKR:2013}.
We will  establish below mappings between simplicial
complexes and epistemic models,
enabling us to prove topological assertions about epistemic models, and knowledge assertions about models based on complexes.
The facets of simplicial complexes
correspond to the states of epistemic models.


\subsection{Action models for tasks}
\label{sec:logicTasks}

We now turn our attention to information change. Recall that
an \emph{action model} is a structure $\Msf = \mbox{$\la \Ssf,\sim,\pre \ra$}$,
where $\Ssf$ is a domain of \emph{action points}, such that for
each $a \in A$, $\sim_a$ is an equivalence relation on $\Ssf$, and
$\pre : \Ssf \to \Lcal$ is a preconditions function that assigns a
\emph{precondition} $\pre(\ssf)$ to each $\ssf \in \Ssf$.



Each action can be thought of as an announcement made by the environment,
which is not necessarily public, in the sense that
not all system agents receive these announcements.

Let $M=\la S, \sim^A,L^{\I{AP}} \ra$ be a Kripke model and $A = \la T,\sim,\pre \ra$
 be an action model. The
 \emph{product update model} $M[A]= \la S\times T, \sim^A,L^{\I{AP}} \ra$, where
each world of  $M[A]$ is a pair $(s,t)$ where $s\in S,t\in T$, such that $\pre(t)$ holds in $s$.
Then, $(s,t)\sim_a (s',t')$ whenever it holds that $s\sim_a s'$
and $t\sim_a t'$. The valuation of $p$ at a pair $(s,t)$ is just as it was at $s$.

In this definition, the worlds of $M[A]$ are obtained by making  copies of each world $s$ of $M$, one 
copy per event $t\in A$. A pair $(s,t)$ is to be included in the worlds of $M[A]$ 
if and only if $M$ satisfies the precondition $preA(t)$ of event $t$.\footnote{Usually pointed 
Kripke models and action models are used, but we do not need them here.}.

We extend the task formalism in terms of Kripke frames from Section~\ref{sec:tasksolvability} to define
an epistemic model. Recall that
a {task} is a triple $\cT=\la \Gz,\Gf,\Delta \ra$, where
the set of initial initial states $\Gz$, is such that in every initial state, the environment is
in the same state, and  $\Gf$ is a set of states, with the same environment state.
Also, $\Delta$ is a carrier morphism of Kripke
frame $I=\la \Gz, \sim^A \ra$ to Kripke frame $O=\la \Gf,\sim^A \ra$.
For the semantics, we assume that the local state of a process in either an input or an output state, 
is determined by the value of a variable. And just as an example simplify notation, we can assume these local variables
take values from the set $\{0,1\}$.
Thus, it is sufficient to consider atomic propositions $l_i$ interpreted as ``$l_i$ is true when agent $a_i$ has
value 1, otherwise it has value 0'', as in Example~\ref{ex:kripkemodel}.
Then, the \emph{input model}  is $ \la \Gz,\sim^A,L^{\I{AP}} \ra$, where
 $\la \Gz, \sim^A\ra$ is the input  Kripke frame (see Definition~\ref{def:taskK}),
  and the function
$L^{\I{AP}} : S \to 2^{\Lit(\I{AP})}$ 
 is such that $l_i$ is true in a state $s$ precisely when the input variable of  $a_i$ is $1$.
 
In this case the indistinguishability relation $\sim^A$ matches the meaning of the atomic propositions, in the
following sense. An agent $a_i$ knows the value of its input variable in a state $u$, because the atomic proposition $l_i$
has the same value in every $v$ such that $u\sim_i v$. More generally, one can infer  $\sim^A$ from the values of the atomic
propositions (and vice-versa, modulo renaming of names of atomic propositions).

The action model for the task,
$\cT = \mbox{$\la \Ssf,\sim,\pre \ra$}$, is defined as follows.
The {action points}, $\Ssf$ is identified with $\Gf$, so action point $ac=\la d_0,\ldots,d_n \ra$ is 
interpreted as ``agent $a_i$ decides value $d_i$'', with precondition that is true in every input state $u$ such
that $ac\in \Delta(u)$.

\subsection{Action models in distributed computing}

We assume that each state $s\in\Gz$ has the same environment state, $\epsilon$, representing
that the shared memory is empty. To study tasks,
 assume the  state of a process in $s$ is determined by the contents of a read-only local variable,
which can take values from some finite set of values.  
Also, we assume the protocol $D$ is {scheduler oblivious,} to make the framework more uniform, see below.

 Fix a {model of distributed computation}
 $M= (\Gz,P_e,\tau,\Psi)$,  with initial states $\Gz\subseteq\cG$, and a deterministic protocol $D$, with some integer $N$,
 and consider
the set  $\cS_N(D,M, \Gz)$  of $N$-admissible executions, where each process executes the same number
 of actions can be seen as a product of $\Gz$ and a set of composed scheduling actions $R$ as in Example~\ref{ex:RWsystems},
where each composed
 block action is of the form $sc_1\odot sc_2\odot\cdots\odot sc_R$.

 The \emph{action model for $M$} is $A = \la R,\sim,\pre \ra$, where each action in $R$ is
 a composed
 block action  of the form $sc_1\odot sc_2\odot\cdots\odot sc_R$,
 and $\pre$ is empty. That is, we allow each action to be applied to any initial state\footnote{
 A  more precise semantics, could uses a copy of each block action
 for each input state, and specify  a precondition that allows the copy to be applied only to that state;
 see discussion below on  scheduler oblivious protocols.}.
 Then, $\sim$ is obtained directly from protocol frame in Section~\ref{sec:carrierprod}.
 Recall that the protocol Kripke frame 
can be written as $P=\la S, \sim^A \ra$, where $S= \Gz \times  AC$  the cartesian
product defined in Lemma \ref{lem:Kprod} between frame $\Gz$ and the action frame
$AC$. More precisely, for two states $sc,sc' \in R$, it holds $sc\sim sc'$ whenever for two states
$u,v\in \Gz$ with $u\sim_a v$, it holds that $u\odot sc \sim_a v\odot sc'$ in $P$.

For the following lemma, we assume the protocol $D$ is \emph{scheduler oblivious,} 
in the sense that the protocol's behavior on two initial states with the same schedule is the same,
except for possible differences in output values (as affected by input values). Equivalently, we could
take the following lemma as an assumption. For   distributed task computability purposes,
the protocol $D$ can be assumed to be full-information (remember all the past, and send the whole local
state in each write operation), which is certainly scheduler oblivious. 


\begin{lemma}
\label{lemm:actIndW}
The relation $\sim$ of an action model is well defined. Namely, consider two $sc,sc' \in R$.
If for two states
$u,v\in \Gz$ with $u\sim_a v$, it holds that $u\odot sc \sim_a v\odot sc'$ in $P$,
then for any two other states $u',v'\in \Gz$ with $u'\sim_a v'$
 it also holds that $u'\odot sc \sim_a v'\odot sc'$.
\end{lemma}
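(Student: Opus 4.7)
The plan is to show that, under the scheduler-oblivious assumption on $D$, whether $u \odot sc \sim_a v \odot sc'$ holds, given $u \sim_a v$, depends only on the pair of schedules $(sc, sc')$ and not on the specific initial states $u, v$. First, I would invoke scheduler-obliviousness: the sequence of operations executed by all agents (writes, reads, and their ordering) with schedule $sc$ is determined by $sc$ alone, independently of the initial state; only the concrete values flowing through the execution depend on the initial state. In particular, agent $a$'s pattern of observations during the execution is a structural invariant $V_a(sc)$ of the schedule.

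Next, I would use this to decompose the final local state of $a$: we can write $(u \odot sc)_a = \Phi_a(u_a, V_a(sc))$, where $\Phi_a$ combines $a$'s initial state with its structural view to produce $a$'s local state. Under the canonical full-information protocol, which the paper explicitly identifies as scheduler-oblivious and sufficient for task-computability purposes, $\Phi_a$ is injective in its second argument, so the equality $\Phi_a(u_a, V_a(sc)) = \Phi_a(v_a, V_a(sc'))$ reduces to the conjunction $u_a = v_a$ and $V_a(sc) = V_a(sc')$.

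Given the hypothesis $u \sim_a v$, the first conjunct $u_a = v_a$ is automatic, so $u \odot sc \sim_a v \odot sc'$ becomes equivalent to $V_a(sc) = V_a(sc')$, a condition on the schedules alone. Thus, if some pair $(u,v)$ with $u \sim_a v$ witnesses $u \odot sc \sim_a v \odot sc'$, then $V_a(sc) = V_a(sc')$; and for any other pair $(u',v')$ with $u' \sim_a v'$, the same two conditions ($u'_a = v'_a$ and $V_a(sc) = V_a(sc')$) are satisfied, so $u' \odot sc \sim_a v' \odot sc'$ follows. This gives the well-definedness of $\sim$ on the action frame.

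The main obstacle is establishing the clean decomposition $(u \odot sc)_a = \Phi_a(u_a, V_a(sc))$ with $V_a(sc)$ depending on $sc$ alone. In a literal full-information protocol, $a$'s local state records values read from other agents, which depend on those agents' initial states, so naively $V_a$ would also depend on $u$ through the observed components. One must argue that what matters for $\sim_a$ in the epistemic sense is the structural schedule-view rather than the raw values — an interpretation which is precisely the content of the scheduler-oblivious assumption. As the paper explicitly remarks, one can equivalently take the conclusion of this lemma as an assumption, thereby sidestepping the delicate formalisation of ``structural view'' in complete generality.
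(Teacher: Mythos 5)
Your decomposition $(u\odot sc)_a=\Phi_a(u_a,V_a(sc))$, with $V_a(sc)$ a function of the schedule alone, is exactly where the argument breaks, and the difficulty you flag in your last paragraph is not a technicality that can be argued away: for a full-information protocol the local state of $a$ after $sc$ records the \emph{values} written by the other agents that $a$ reads, and those values depend on the initial local states of agents other than $a$, which the hypothesis $u\sim_a v$ does not constrain at all. Concretely, take two agents $a_0,a_1$ with binary inputs, one round of immediate snapshot, and the fully concurrent schedule $sc=sc'=[\{a_0,a_1\}]$. With $u=v=(0,0)$ one has $u\sim_{a_0}v$ and trivially $u\odot sc\sim_{a_0}v\odot sc'$; but with $u'=(0,0)$ and $v'=(0,1)$ one still has $u'\sim_{a_0}v'$, while $a_0$'s full-information view after the round is $(0,0)$ in the first case and $(0,1)$ in the second, so $u'\odot sc\not\sim_{a_0}v'\odot sc'$. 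Hence the reduction of $u\odot sc\sim_a v\odot sc'$ to ``$u_a=v_a$ and $V_a(sc)=V_a(sc')$'' is not available, and injectivity of $\Phi_a$ in its second argument does not give you the implication.

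That said, you are not being held to a higher standard than the paper: the paper gives no proof of this lemma, derives it informally from the scheduler-oblivious hypothesis, and explicitly offers the escape hatch of taking the lemma itself as an assumption --- which is the position your closing paragraph retreats to. What is missing from both your account and the paper's is a precise reading of $\sim_a$ on $P$ (or a restriction on which pairs $(u,v)$ may serve as witnesses) under which the quantification over initial states genuinely separates from the condition on schedules; as written, the statement only holds if indistinguishability of protocol states is interpreted structurally (e.g.\ the same sets of participating agents observed by $a$) rather than as literal equality of full-information local states. If you want a real proof rather than an axiom, that structural interpretation is the thing to formalise.
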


An action model   $A = \la R,\sim,\pre \ra$ for a model $M$ with protocol $D$ and $N$
does indeed depend $M$, $D$ and $N$, although the action points $R$ as we defined
may not seem to do so. An action point $ac$ in $R$ is typically just a sequence of sets
of agents to be scheduled, and hence the same set $R$ can be used for different models.
Thus, remarkably, the effect of different models is captured solely by $\sim$.

We have defined an input epistemic model
$I=\la S, \sim^A,L^{\I{AP}} \ra$  and an action model $A = \la T,\sim,\pre \ra$,
and we get 
the {product update model} $I[A]= \la S\times T, \sim^A,L^{\I{AP}} \ra$,
whose Kripke frame is exactly the protocol Kripke frame of Section~\ref{lem:Kprod},
but decorated with the atomic propositions $l_i$. Moreover, a state  in the product is of the form $(u,ac)$,
where $u$ is an initial state in $\Gz$ and $ac$ is an action in $R$, and it
has the same propositions that the input state $u$.

The action model for tasks is defined in an analogous way, and there is an epistemic
model for the output Kripke frame by the corresponding epistemic product.


In the important following result, we lift the task solvability condition of Theorem \ref{thm:Kripketasksolv}
from the category of Kripke frames to the category of Kripke models~: 

\begin{theorem}
\label{thm:Kripketasksolv2}
Suppose we have input, output and specification Kripke models, respectively 
$I$, $O$ and $\Delta \subseteq I \times O$. 
Then task solvability in the sense of Definition \ref{def:solvingTaskK} is equivalent to the
existence of a morphism $h$ such that the same equation as in Theorem \ref{thm:Kripketasksolv} holds~: 
\begin{center}
\begin{tikzpicture}
  \node (s) {$P$};
  \node (xy) [below=2 of s] {$\Delta \subseteq I \times O$};
  \node (x) [left=of xy] {$I$};
  \draw[<-] (x) to node [sloped, above] {$\pi_I$} (s);
  \draw[->, dashed, left] (s) to node {$h$} (xy);
  \draw[->] (xy) to node [below] {$\pi_I$} (x);
\end{tikzpicture}
\end{center}
\end{theorem}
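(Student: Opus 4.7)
The plan is to reduce to Theorem \ref{thm:Kripketasksolv} via the forgetful functor $\mathcal{KM}\to\mathcal{K}$ that drops valuations, and then check that in this setting the valuations on $P$ and $\Delta$ are rigidly determined by the $I$-component, so that any Kripke frame morphism over $I$ automatically respects them.

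First I would handle the easy direction: any Kripke model morphism $h$ making the diagram commute is, after forgetting valuations, a Kripke frame morphism over $I$, and Theorem \ref{thm:Kripketasksolv} then yields task solvability in the sense of Definition \ref{def:solvingTaskK}.

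For the converse, Theorem \ref{thm:Kripketasksolv} applied to the underlying Kripke frames supplies a frame morphism $h : P \rightarrow \Delta$ with $\pi_I \circ h = \pi_I$. The only additional ingredient needed is to show that $h$ respects valuations in the sense of Definition \ref{def:KModMorph}, i.e.\ $L^{\I{AP}}(h(x)) \subseteq L^{\I{AP}}(x)$ for every state $x \in P$. The key observation is that $P$ and $\Delta$ are both instances of the product update construction, namely $P = I[A]$ with the protocol action model and $\Delta = I[A_\Delta]$ with the task action model built in Section~\ref{sec:logicTasks}; by the definition of the restricted modal product, the valuation at any state $(u,ac)$ equals $L^{\I{AP}}(u)$. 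Hence $L^{\I{AP}}(x) = L^{\I{AP}}(\pi_I(x))$ and $L^{\I{AP}}(h(x)) = L^{\I{AP}}(\pi_I(h(x)))$, while commutativity of the diagram forces $\pi_I(h(x)) = \pi_I(x)$; therefore $L^{\I{AP}}(h(x)) = L^{\I{AP}}(x)$, which trivially satisfies the required inclusion. The same observation also shows that $\pi_I : P \rightarrow I$ and $\pi_I : \Delta \rightarrow I$ are themselves Kripke model morphisms, so the diagram makes sense in $\mathcal{KM}$.

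The main subtlety, which is conceptual rather than technical, is interpreting $\Delta \subseteq I \times O$ coherently at the Kripke model level: the underlying Kripke frame of the specification model is indeed the sub-frame of $I \times O$ cut out by the carrier morphism $\Delta$, but its valuation must be pulled back from $I$ along $\pi_I$, exactly in line with the product update construction for the task action model of Section~\ref{sec:logicTasks}. Once this identification is in place, the lift from $\mathcal{K}$ to $\mathcal{KM}$ costs nothing and Theorem \ref{thm:Kripketasksolv2} follows directly from Theorem \ref{thm:Kripketasksolv}.
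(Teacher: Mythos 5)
The paper never actually writes out a proof of Theorem \ref{thm:Kripketasksolv2}: the statement is presented as a direct lift of Theorem \ref{thm:Kripketasksolv} from ${\cal K}$ to ${\cal KM}$ and is immediately followed by the epistemic interpretation. Your proposal supplies the missing argument, and the strategy --- apply the frame-level theorem to the underlying frames, then check that the resulting $h$ automatically respects valuations because the valuations on $P$ and $\Delta$ are pulled back from $I$ and commutativity over $I$ pins down the $I$-component --- is the right way to make the lift precise. In particular, the observation that $L^{\I{AP}}(u,ac)=L^{\I{AP}}(u)$ for states of the product update model $I[A]$ is correct, and it is exactly what makes $\pi_I:P\rightarrow I$ a model morphism and the valuation condition on $h$ come for free.

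The one step you should not gloss over is your treatment of the valuation on $\Delta$. You stipulate that it is pulled back from $I$ along $\pi_I$, but this conflicts both with the paper's cartesian product in ${\cal KM}$ (where $L^{\I{AP}}(s,t)=L^{\I{AP}}(s)\cup L^{\I{AP}}(t)$, so a sub-model $\Delta\subseteq I\times O$ inherits the output literals $d_i$) and with Example \ref{ex:ex2}, where the states of $\Delta$ are explicitly decorated with both $l_i$ and $d_i$. Under that reading your identity $L^{\I{AP}}(h(x))=L^{\I{AP}}(\pi_I(h(x)))$ fails, and the inclusion $L^{\I{AP}}(h(x))\subseteq L^{\I{AP}}(x)$ required by Definition \ref{def:KModMorph} cannot hold, since $L^{\I{AP}}(h(x))$ then contains decision literals absent from $L^{\I{AP}}(x)$. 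So you must either make your convention explicit (restrict the atomic propositions relevant to the morphism condition to the input literals, as you do implicitly), or enrich the valuation on $P$ with the decision literals determined by the decision map before $h$ can qualify as a morphism in ${\cal KM}$. As written, the sentence ``therefore $L^{\I{AP}}(h(x))=L^{\I{AP}}(x)$'' is valid only under an interpretation of $\Delta$ that differs from the one the paper's own example uses; once that choice is stated, the rest of the argument is sound.
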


By Theorem \ref{prop:prop1}, we know that the knowledge of each agent (or process)
can only decrease (or stay constant) along the two (surjective) arrows $\pi_I$
of this diagram. So the diagram
above has a simple an illuminating interpretation : we can only improve knowledge from
$I$ to $P$ (the protocol should improve knowledge of all agents through 
communication). Now the task is solvable if and only if there is enough knowledge
from all agents (map $h$) to that to have at least the knowledge specified by Kripke
model $\Delta$.



\section{Combinatorial topology and Kripke models}
\label{sec:categor}

\label{sec:categories}
\comment{Kripke frames, and Kripke models can be organized as categories. The interest is that the semantics of distributed systems can be expressed using categorical operators on
Kripke frames. 
}

We now describe the equivalence of categories between (proper) Kripke
frames and (pure chromatic) simplicial complexes; the semantics of distributed systems is then
transported from Kripke models to simplicial complexes (suitably decorated). 
This shows that ``knowledge'' 
exhibits topological invariants.

\comment{
\subsection{A categorical view on Kripke frames and models}

We can define a suitable category $\cal K$ 
of Kripke frames with weak morphisms we are interested in~:
}

\comment{
\begin{definition} [$p$-morphism of Kripke frame] 
Let $M=\la S, \sim^A \ra$ and $N=\la T,\sim^A \ra$ be two Kripke frames. A p-morphism of Kripke
frame is a weak morphism $f: M \rightarrow N$ such that whenever 
$f(u) \sim_a w$ there is a $v \in W$ such that $u \sim_a v$ and $f(v)=w$. 
\end{definition}

These p-morphisms are special kinds of bisimulations of Kripke frames, i.e. generate a relation $B$
between nodes in $S$ and nodes in $T$ such that~: 
\begin{itemize}
\item if $u B u'$ and $u \sim_a v$ then there exists $v' \in T$ such that $v B v'$ and $u' B v'$
\item if $u B u'$ and $u' \sim_a v'$ then there exists $v \in W$ such that $v B v'$ and $u \sim_a v$
\end{itemize}
}

\subsection{Simplicial complexes and simplicial models}



We consider now an apparently very different structure from a Kripke graph, which has also been used to model
distributed systems \cite{HerlihyKR:2013} (for a textbook covering combinatorial topology notions see~\cite{kozlov:2007})~: 

\begin{definition} [Simplicial complex]
A simplicial complex $C$ is a family of non-empty finite subsets of a set $S$ such that
for all $X \in C$, $Y \subseteq X$ implies $Y \in C$ ($C$ is downwards closed).
\end{definition}

Elements of $S$ (identified with singletons) 
are called vertices, elements of $C$ of greater cardinality are called faces. The dimension of a face $X \in C$, $\dim X$, is the cardinality of $X$ minus one. 
The maximal faces of $C$ (i.e. faces that are not subsets of any other face) are called facets. 
The dimension of a simplicial complex is the maximal dimension of its faces. Pure simplicial complexes are simplicial complexes such that the maximal faces are all of the same
dimension. 

\begin{definition} [Simplicial maps] 
Let $C$ and $D$ be two simplicial complexes. A simplicial map $f: C \rightarrow D$ is a function that
maps the vertices of $C$ to the vertices of $D$ such that for all faces $X$ of $C$, $f(C)$ (the image set
on the subset of vertices $C$) is a face of $D$. 
\end{definition}

Now, we can define pure simplicial maps respecting facets. 
Finally, we will associate colors to each vertex of simplicial complexes, representing, as in~\cite{HerlihyKR:2013}, the names of the different processes
involved in a protocol. We also define chromatic simplicial maps as the simplicial
maps respecting colors. 

\begin{definition} [Pure simplicial complexes]
A pure simplicial complex $C$ is a simplicial complex such that all facets have the same dimension as $C$. 
\end{definition}

\begin{definition} [Pure simplicial maps] 
Let $C$ and $D$ be two pure simplicial complexes. A pure simplicial map $f : C \rightarrow D$ is
a simplicial map such that $\dim f(X)=\dim X$, for all $X \in C$, i.e. is such that $f$, restricted
to vertices of a given simplex, is injective. 
\end{definition}

\begin{definition} [Chromatic simplicial complex] 
A chromatic simplicial complex $C$ on a vertex set $S$, colored by a set of colors $L$, is a simplicial complex together
with a (coloring) map $l: S \rightarrow L$ (inducing a map, still called $l$, from $2^S$ to $2^L$) 
such that 
for all $X \in C$,
$\dim l(X)=\dim X$.
\end{definition}

The condition of the coloring map is indeed to be interpreted as 
the property that the vertices of $X$ have all distinct colors. 

\begin{definition} [Chromatic simplicial maps] 
A chromatic simplicial map is a simplicial map $f: C \rightarrow C'$ which preserves the coloring, 
i.e. such that $l'(f(X))=l(X)$, for all $X \in C$. 
\end{definition}

Note that a chromatic simplicial map is necessarily injective on each simplex, 
since as it preserves
colors, and points of any simplex are of distinct colors, $f$ cannot equate two points
of the same simplex. This also implies that for all $X$ and $Y$ chromatic simplicial complexes,
$f(X\cap Y)=f(X)\cap f(Y)$. 

Let $p{\cal CS}$ be the category of pure chromatic simplicial complexes. 

A common way to describe actual 
states in combinatorial topology is by decorating vertices of simplicial complexes
by local states of processes. 
Instead of describing actual values, we rely on a logical languages to describe the properties
of local states. 

Let $\I{AP}$ be a countable set of propositional variables and $A$ a set of agents. We use
the same notations as in Section \ref{sec:dynEpSemantics}~: 
the set of {\em literals} over $\I{AP}$ is
$\Lit(\I{AP})= \I{AP} \cup \{\neg p \mid p \in \I{AP}\}$.
If $X \subseteq \Lit(\I{AP})$, then
$\overline{X}= \{ \overline\ell \mid \ell \in X \}$; 
$X$ is {\em consistent} iff 
$\forall$ $\ell \in X$, $\overline{\ell} \notin X$;
and $X$ is {\em $\I{AP}$-maximal}  iff
$\forall$ $p \in \I{AP}$, either $p \in X$ or $\neg p \in X$.


\begin{definition}
A simplicial model is $(C, l, v)$ where 
$(C,l)$ is a pure chromatic simplicial set,  
and $v: S \rightarrow \wp(\I{AP})$ an assignment of subsets of the set of literals of $\I{AP}$  
for each state $s \in S$ such that for all facets $f=(s_0,\ldots,s_n) \in C$, 
$\bigcup\limits_{i=0}^n v(s_i)$ (that we denote as $v(f)$ by an abuse of notation) is $AP$-maximal.
\end{definition}


\comment{
\subsection{Interpretation of multi-modal S5 logics in simplicial models}

\ForAuthors{Sections to be reorganized - in particular wrt dynEpLogDC.tex}

The satisfaction relation, determining when a formula of multi-modal S5 logics is true in a
facet $f=(s_0,\ldots,s_n)$ of a simplicial model $M=(C,l,v)$, is defined as:

\begin{tabular}{lrl}
$M,f \models p$ & iff & $p \in v(f)$\\
$M,f \models \neg \varphi$ & iff & $M,f \not\models \varphi$\\
$M,f \models \varphi \wedge \psi$ & iff & $M,f \models \varphi
\mbox{ and } M,f \models \psi$\\
$M,f \models K_a \varphi$ & iff & $\varphi \in v(s_i)$ 
\end{tabular}

(where $i$, in the last statement, is the unique index such
that $l(s_i)=a$) 
}

\subsection{Equivalence between (proper) Kripke models and simplicial models}
\label{equivalencesimpcomp}

We have the main result of this section~: 


\begin{theorem}
\label{thm:equiv}
Let $A$ be a finite set and $p{\cal CS}_A$ (resp. ${\cal K}_A$) be 
the full subcategory of pure chromatic simplicial complexes with colors in $A$ (resp. the full 
subcategory of proper Kripke frames with agent set $A$).
$p{\cal CS}_A$ and ${\cal K}_A$ are equivalent categories.
\end{theorem}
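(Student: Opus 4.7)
The plan is to explicitly construct a pair of functors $F : {\cal K}_A \to p{\cal CS}_A$ and $G : p{\cal CS}_A \to {\cal K}_A$ and verify that their compositions are naturally isomorphic to the identity functors. The constructions are the evident dualities already suggested in the overview: a state of a Kripke frame corresponds to a facet of the complex, and for each agent $a$ the $a$-class of a state is the $a$-colored vertex of the corresponding facet.

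First I would define $F$. Given a proper Kripke frame $M = \la S, \sim^A \ra$, take the vertex set to be $V_M = \{[s]_a \mid s \in S,\ a \in A\}$, where $[s]_a$ is the $\sim_a$-equivalence class of $s$, and color it by $l([s]_a) = a$. For each $s \in S$ declare the set $\sigma(s) = \{[s]_a \mid a \in A\}$ to be a facet, and let $F(M)$ be the downward closure of $\{\sigma(s) \mid s \in S\}$. Since the $[s]_a$ have pairwise distinct colors, each $\sigma(s)$ has dimension $|A|-1$, so $F(M)$ is pure chromatic. On morphisms, a Kripke-morphism $f : M \to M'$ satisfies $s \sim_a s' \Rightarrow f(s) \sim_a f(s')$, so $[s]_a \mapsto [f(s)]_a$ is well defined on vertices; it is color-preserving and sends facets to facets, hence is a chromatic simplicial map $F(f)$. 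Functoriality is immediate.

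Next I would define $G$. Given $(C,l) \in p{\cal CS}_A$, the states of $G(C)$ are the facets of $C$; because $C$ is pure chromatic, each facet $X$ contains exactly one vertex $X_a$ of color $a$, and we set $X \sim_a Y$ iff $X_a = Y_a$. This is an equivalence relation, and distinct facets differ in at least one vertex, hence in at least one color, so $G(C)$ is proper. A chromatic simplicial map $g : C \to C'$ is injective on every simplex and color-preserving, so it sends each facet to a facet and commutes with the choice of $a$-colored vertex; hence $X \mapsto g(X)$ is a Kripke-morphism $G(g)$. Again functoriality is immediate.

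Now I would verify the two natural isomorphisms. For $\eta_M : M \xrightarrow{\sim} G F(M)$, send $s \mapsto \sigma(s)$; this is a bijection on objects because properness gives injectivity ($\sigma(s) = \sigma(t)$ forces $s \sim_a t$ for every $a$, hence $s = t$) and it is clearly surjective, and by construction $s \sim_a t \iff [s]_a = [t]_a \iff \sigma(s) \sim_a \sigma(t)$, so it is an isomorphism of Kripke frames; naturality in $f$ is a direct check. For $\varepsilon_C : F G(C) \xrightarrow{\sim} C$, send a vertex $[X]_a$ of $FG(C)$ (an $\sim_a$-class of facets of $C$) to the common $a$-colored vertex $X_a$; this is well defined and color-preserving, and it extends to a simplicial bijection on facets, because pureness plus downward closure guarantees that $C$ is determined by its facets. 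Naturality in $g$ is again a direct check.

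The main obstacle is keeping the bookkeeping of all the identifications correct and, crucially, using the properness hypothesis in exactly the right place — without it, $\eta_M$ fails to be injective and $F$ is not even essentially surjective onto $p{\cal CS}_A$. A secondary subtlety is to confirm that a chromatic simplicial map between the complexes we construct must map facets to facets (not to lower-dimensional faces), which follows from injectivity on each simplex together with pureness of the target. Once these two points are checked, the equivalence is a straightforward unwinding of definitions.
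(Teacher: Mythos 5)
Your proposal is correct and follows essentially the same route as the paper: the same pair of functors (facets of the complex as states, with $X\sim_a Y$ iff the $a$-colored vertices coincide, which is the paper's condition $a\in l(X\cap Y)$; and one facet per state glued along the $\sim_a$-classes), with properness used in exactly the same place to ensure distinct states give distinct facets. The only cosmetic difference is that you build the complex directly on the vertex set of equivalence classes $[s]_a$ rather than as a quotient of a coproduct of $n$-simplexes, which is the same construction described more explicitly.
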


\begin{proof}
We construct functors $F: p{\cal CS} \rightarrow {\cal K}$ and $G: {\cal K} \rightarrow p{\cal CS}$ as
follows. 

Let $C$ be a pure chromatic complex on the set of colors $A$. We associate $F(C)=\la S, \sim^A \ra$ 
with $S$ being the set of facets of $C$ and the equivalence relation $\sim_a$, for all $a \in A$, generated by
the relations $X \sim_a Y$ (for $X$ and $Y$ facets of $C$) if $a \in l(X \cap Y)$. 

For $f: C \rightarrow D$ a morphism in $p{\cal CS}$, consider facets $X$ and $Y$ of $C$ such 
that $X \sim_a Y$ in $F(C)$, thus $a \in l(X \cap Y)$. Because $f$ is a chromatic 
simplicial map, we have seen that $f(X \cap Y)=f(X)\cap f(Y)$. 
Because $f$ is chromatic, 
$f(X)$ is a facet of $D$ which has the same
colors as $X$, hence $a \in l(f(X))$. Similarly for $f(Y)$, which is such that $a \in l(f(Y))$. The
colors of $f(X)$ (resp. $f(Y)$) are in bijection with the vertices of $f(X)$ (resp. $f(Y)$), hence
$a \in l(f(X)\cap f(Y)$, therefore $f(X) \sim_a f(Y)$. 

Conversely, consider a Kripke frame $M=\la S,\sim^A \ra$. Define $G(M)$ to be the quotient of the coproduct of $n$-simplexes 
$\coprod\limits_{s \in S} \{v^s_0,\ldots,v^s_n\}$ where $l(v^s_i)=a_i \in A$ ($A=\{a_0,\ldots,a_n\}$) by the relation
$R$ defined as~:
$v^s_i R v^{s'}_i$ if and only if $s \sim_i s'$.
This relation is then extended on higher simplices by 
$\{x_0,\ldots,x_k\} R \{y_0,\ldots,y_l\}$ if and only if all $x_i$ are in relation
(by $R$) with some $y_{j_i}$ and inversely, all $y_j$ are in relation with some
$x_{i_j}$. $R$ can thus be seen as a sub-chromatic simplicial complex of $X \times X$ and the
quotient is well-defined, as a chromatic simplicial complex. It is pure since, as
we equate only vertices with the same colors, and that colors are in bijection with
extremal points in all simplexes, we cannot equate a simplex with a lower
dimensional simplex. The facets are the $\{v^s_0,\ldots,v^s_n\}$ where $s \in S$, since the Kripke frames
we consider being proper, we cannot equate two facets together. 

Consider now a Kripke frame $M=\la S,\sim^A \ra$ in ${\cal K}$ with agent set $A$. 
$FG(M)$ is the Kripke frame $N=\la T,\sim'^A \ra$ such that $T$ is the set of facets of
$G(M)$. But we have just seen that the set of facets of $G(M)$ are the
sets of vertices $\{v^s_0,\ldots,v^s_n\}$ (where $s \in S$), therefore, is in bijection with
$S$. Finally, in $FG(M)$, $p \sim'_a q$ if and only if $a \in l(p \cap q)$, where
$l$ is the coloring, in $G(M)$, of $p$ and $q$ which are facets in $G(M)$. But facets
in $G(M)$ are just in direct bijection with the worlds of $M$, i.e. $p=\{v^s_0,\ldots,
v^s_n\}$ and $q=\{v^t_0,\ldots,v^t_n\}$ where $s, t \in M$. Note that $l(v^s_i)=a_i$
and $l(v^t_i)=a_i$ so $a \in l(p\cap q)$ means that some $a=a_i$ for some $i$
and $v^s_i R v^t_i$. This can only be the case, by definition of $G(M)$ if $s \sim_i t$. 
This proves that $FG(M)$ and $M$ as isomorphic Kripke frame. 

Consider now a pure chromatic simplicial complex $X \in p{\cal CS}$. 
It is easily seen that $GF(X)$ is isomorphic, as a pure chromatic simplicial complex, 
to $X$, hence $p{\cal CS}$ and $\cal K$ are equivalent categories. 
\end{proof}

We  write $\bf K$
(resp. ${\bf f}: {\bf K} \rightarrow {\bf L}$) when $K$ is a Kripke frame
for the corresponding pure chromatic simplicial complex (resp. when $f$ is a
morphism of Kripke frame, for the corresponding simplicial map). 

Let $A$ be a finite set (of ``agents'') and ${\cal SM}_{A,\I{AP}}$ (resp. ${\cal KM}_{A,\I{AP}}$) be 
the full subcategory of simplicial models with colors in $A$ and vertices decorated with formulas in 
$\I{AP}$ (resp. the full 
subcategory of proper Kripke models with agent set $A$ and atomic propositions in $\I{AP}$). 
Theorem \ref{thm:equiv} extends to the following theorem, in a straightforward
manner~: 

\begin{theorem}
\label{thm:equiv2}
${\cal SM}_{A,\cG}$ and ${\cal KM}_{A,\cG}$ are equivalent categories.
\end{theorem}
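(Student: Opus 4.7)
The plan is to lift the equivalence established in Theorem \ref{thm:equiv} to the decorated setting in a straightforward way, by transporting the label functions along the bijections of Theorem \ref{thm:equiv} between facets of a pure chromatic simplicial complex and states of the associated Kripke frame, and between $a$-colored vertices and $\sim_a$-equivalence classes of states. Since the underlying frame/complex part is already handled by Theorem \ref{thm:equiv}, what remains is to define the two functors on labels, check compatibility with morphisms (which by Definition \ref{def:KModMorph}, and the analogous shrinking condition on simplicial models, can only lose decorations along morphisms), and then verify that the unit and counit of the equivalence are isomorphisms at the level of labels as well.

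First I would extend $F:p\mathcal{CS}_A \to \mathcal{K}_A$ to a functor $F':\mathcal{SM}_{A,\I{AP}}\to\mathcal{KM}_{A,\I{AP}}$. Given a simplicial model $(C,l,v)$, take the Kripke frame $F(C)$ as in Theorem \ref{thm:equiv} and define the labeling on each state (facet) $f=\{s_0,\ldots,s_n\}$ by $L^{\I{AP}}(f) := v(f) = \bigcup_i v(s_i)$. The simplicial model axiom guarantees that this union is consistent and $\I{AP}$-maximal, which is precisely what a Kripke model labeling requires. Functoriality is immediate since a simplicial map $h$ must satisfy $v'(h(s))\subseteq v(s)$ on vertices, hence $\bigcup_i v'(h(s_i))\subseteq \bigcup_i v(s_i)$, which is the Kripke-model morphism condition on facets.

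In the other direction, I extend $G$ to $G':\mathcal{KM}_{A,\I{AP}}\to\mathcal{SM}_{A,\I{AP}}$. Given $M=\la S,\sim^A,L^{\I{AP}}\ra$, use $G(M)$ and decorate each vertex $v_a^s$ by $v(v_a^s) := \{\ell\in L^{\I{AP}}(s) : \ell\in L^{\I{AP}}(s')\text{ for every }s'\sim_a s\}$, i.e.\ the literals that agent $a$ already knows to hold at $s$. This assignment is well-defined on the quotient identifying $v_a^s$ with $v_a^{s'}$ whenever $s\sim_a s'$, by reflexivity and symmetry of $\sim_a$, and morphisms of Kripke models send such vertices to vertices with smaller literal sets, matching the simplicial decoration-shrinking condition.

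The main obstacle — and the step that deserves careful attention — is verifying that $\bigcup_{a\in A} v(v_a^s)=L^{\I{AP}}(s)$, so that $F'\circ G'$ recovers the original labeling and thus defines the counit of the equivalence. This recovery holds under the mild assumption, natural in the distributed setting and illustrated by Example~\ref{ex:kripkemodel}, that atomic propositions are stratified by agent, $\I{AP}=\bigsqcup_{a\in A}\I{AP}_a$, with each $p\in \I{AP}_a$ being $\sim_a$-invariant. Under this stratification $v(v_a^s)=L^{\I{AP}}(s)\cap \Lit(\I{AP}_a)$, and the union over $a$ is $L^{\I{AP}}(s)$. With that identified, the unit $\eta:\mathrm{Id}\Rightarrow G'\circ F'$ and counit $\epsilon:F'\circ G'\Rightarrow \mathrm{Id}$ inherit the isomorphism property from Theorem \ref{thm:equiv} on the underlying frame/complex layer and from the preceding computation on the label layer, with naturality following componentwise from naturality in the unlabeled equivalence. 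Hence $\mathcal{SM}_{A,\I{AP}}$ and $\mathcal{KM}_{A,\I{AP}}$ are equivalent categories.
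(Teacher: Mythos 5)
Your proposal follows the only route the paper could intend --- the paper itself offers no argument beyond asserting that Theorem \ref{thm:equiv} ``extends in a straightforward manner'' --- namely transporting the decorations along the facet--state correspondence of Theorem \ref{thm:equiv}. The forward functor $F'$ (label a state by the union of the decorations of the vertices of the corresponding facet) is unproblematic, modulo the fact that the paper never actually defines morphisms of simplicial models; the vertex-wise shrinking condition $v'(h(s))\subseteq v(s)$ you assume is the natural analogue of Definition \ref{def:KModMorph} and is what makes $F'$ functorial.

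The place where you slow down is the right place, and the hypothesis you add is not optional. Pushing a Kripke labeling down to vertex decorations is genuinely obstructed: the colour-$a$ vertex of $G(M)$ is an equivalence class of the $v^s_a$ under $\sim_a$, so whatever it carries must be $\sim_a$-invariant, and then $\bigcup_{a}v(v^s_a)$ need not recover $L^{\I{AP}}(s)$. This is not an artefact of your particular choice of $v(v^s_a)$: with $A=\{a_0,a_1\}$ and $\I{AP}=\{p\}$, the two single-edge simplicial models with $v(x_{a_0})=\{p\}$, $v(x_{a_1})=\emptyset$ and $v(y_{a_0})=\emptyset$, $v(y_{a_1})=\{p\}$ are non-isomorphic in ${\cal SM}_{A,\I{AP}}$ yet both map under $F'$ to the same one-state Kripke model, and the identity on that Kripke model does not lift to a simplicial model morphism in one of the two directions; so $F'$ is essentially surjective but not full, and no quasi-inverse exists for the categories as literally defined. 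Your stratification hypothesis $\I{AP}=\bigsqcup_{a\in A}\I{AP}_a$ with each $p\in\I{AP}_a$ being $\sim_a$-invariant (equivalently, restricting to models whose decorations are local to the vertex colour, as in Examples \ref{ex:kripkemodel} and \ref{ex:consensus}) is exactly the restriction under which $v(v^s_a)=L^{\I{AP}}(s)\cap\Lit(\I{AP}_a)$, the unit and counit become isomorphisms, and the argument closes. So your proof is correct, but you should state explicitly that it establishes the theorem for the subcategories of agent-local models; without that restriction the statement, as printed, fails.
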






\vspace{-0.1cm}

In the category of pure chromatic simplicial complexes (resp. simplicial models), 
the cartesian product of $K=(C,S,l)$ 
with $K'=(C',S',l')$ (resp. simplicial models $K=(C,S,l,v)$ and 
$K'=(C',S',l',v')$), where $C$ is a family of non-empty
finite subsets of $S$, $C'$ a family of non-empty subsets of $S'$, $l$ is a labelling map
from $S$ to $L$, and $l'$ is a labelling map from $S'$ to $L$, is defined as $K \times K'=
(D,S\times S',m)$ (resp. $K \times K'=(D,S\times S',m,w)$),
\begin{itemize}
\item $D=\{\{(s_0,s'_0),\ldots,(s_k,s'_k)\} | \{s_0,\ldots,s_k\} \in C, \{s_0,\ldots,s_k\} \in 
C', l(s_i)=l'(s'_i) \}$
\item $m(s_i,s'_i)=l(s_i)=l'(s'_i)$ (resp. $w$ is defined by $w(s)=v(s) \cup v'(s)$) 
\end{itemize}
Of course, there is a simplicial map $\pi_K$ from $K \times K'$ to $K$ and a map
$\pi_{K'}$ from $K \times K'$ to $K'$ (which are 
the first and second projection on each simplex, mapped onto subsets of simplices). And of
course, whenever we have a morphism $u$ in $p{\cal CS}$ (resp. $\cal SM$) from $U$ to $K$ and another one
$v$ from $U$ to $K'$, the map $h$ from $U$ to $K \times K'$ defined by
$h(x)=(u(x),v(x))$ is such that $u=\pi_K \circ h$ and $v=\pi_{K'} \circ h$. 
These are the direct translations under the equivalences of Theorem \ref{thm:equiv}
(resp. \ref{thm:equiv2}) of the cartesian product of Kripke frames (resp. Kripke models). 

\begin{example}
\label{ex:consensus}
We have pictured below the product ${\bf I}\times {\bf O}$, with its two projections
$\pi_{\bf O}$ and $\pi_{\bf I}$, for the input complex ${\bf I}$ for two processes, having boolean
input values, and for the output complex ${\bf O}$ for binary consensus~: 

\begin{center}
\begin{tikzpicture}[auto,line/.style={draw,thick,-latex',shorten >=2pt}]
\matrix[column sep=2mm,row sep=2mm]
{
\node (p000) {$\scriptscriptstyle (a_0,0,0)$}; & \node (p001) {$\scriptscriptstyle (a_0,0,1)$}; & & & & & & \node (p00) {$\scriptscriptstyle (a_0,0)$}; 
\\
& & \node (p100) {$\scriptscriptstyle (a_1,0,0)$};  & \node (p101) {$\scriptscriptstyle (a_1,0,1)$}; & \node (dot1) {}; 
& \node[above] (dot2) {$\pi_{\bf I}$}; & \node (dot3) {}; & & & & \node (p10) {$\scriptscriptstyle (a_1,0)$}; \\
\node (p110) {$\scriptscriptstyle (a_1,1,0)$}; & \node (p111) {$\scriptscriptstyle (a_1,1,1)$}; & & & & & & \node (p11) {$\scriptscriptstyle (a_1,1)$}; \\
& \node (dot4) {}; & \node (p010) {$\scriptscriptstyle (a_0,1,0)$}; & \node (p011) {$\scriptscriptstyle (a_0,1,1)$}; & & & & & & & \node (p01) {$\scriptscriptstyle (a_0,1)$};
\\
& \node[left] (dot5) {$\pi_{\bf O}$}; \\
& \node (dot6) {}; \\

\node (d00) {$\scriptscriptstyle (a_0,0)$}; \\ 
& & \node (d10) {$\scriptscriptstyle (a_1,0)$}; \\
& \node (d11) {$\scriptscriptstyle (a_1,1)$}; \\
& & & \node (d01) {$\scriptscriptstyle (a_0,1)$};\\
};
\draw (p110) -- (p000) -- (p100) -- (p010) -- (p110);
\draw (p101) -- (p011);
\draw [dashed] (p011) -- (p111) -- (p001);
\draw (p001) -- (p101);
\draw[->] (dot1) -- (dot3);
\draw[->] (dot4) -- (dot6);
\draw (p00) -- (p10) -- (p01) -- (p11) -- (p00);
\draw (d00) -- (d10);
\draw (d11) -- (d01);
\end{tikzpicture}
\end{center}
In the example above the vertices of the simplicial models are decorated with
predicates, as follows~: for $O$, $(a_i,j)$ is a notation for, agent $a_i$ is
such that ``its output value is $j$'' (this is the predicate that we will call $d_j$
in the sequel) ; for $I$, $(a_i,j)$ is a notation for, 
agent $a_i$ is such that ``its input value is $j$'' (this is the predicate that we
will call $l_j$ is the sequel). Thus in the product, 
$(a_i,j,k)$ means, agent $a_i$ is such that ``its input value is $j$'' and
``its output value is $k$'' (as expected for a cartesian product, this is
the set of predicates $\{l_j,d_k\}$). 
\end{example}

In fact, there is much more categorical structure in pure chromatic simplicial complexes, although
to see this, we would have to go into much more involved arguments, that we do not
need in this simplified presentation. Let us just mention that the overall categorical
structure can be derived by observing that simplicial complexes form a quasi-topos
with all (small) limits and (small) colimits \cite{Baez}, and that $p{\cal CS}$ can
be seen as a particular subcategory of a slice category of these simplicial complexes. 

\subsection{Action models, protocol complexes and task solvability}

%

Given an action model $\Msf = \mbox{$\la \Ssf,\sim,\pre \ra$}$, with 
equivalent simplicial model 
${\bf \Msf}$, and for any input Kripke
model $I$, with equivalent simplicial model $\bf I$, we form the functor
$P$ which associates the sub-Kripke frame of the cartesian product
$I\times \Msf$ (where some of the cartesian product has been filtered out due to
preconditions), as in Section \ref{sec:carrierprod}. 

\begin{example}
\label{ex:prot}
In the atomic read/write model for two processes $a_0$ and $a_1$ and 1-admissible runs, 
${\bf M}$ corresponds to the
three possible schedules of executions, with no precondition 
i.e. $P$ is the cartesian product with ${\bf \Msf}$ and
is the unique endofunctor on pure chromatic simplicial complexes, commuting with
colimits, such that
its image on a segment 
\begin{tikzpicture}[auto,line/.style={draw,thick,-latex',shorten >=2pt}]
\matrix[column sep=1.5mm,row sep=1mm]
{
\node (p00) {$\scriptscriptstyle a_0$}; & & \node (p11) {$\scriptscriptstyle a_1$}; \\
};
\draw (p00) -- (p11);
\end{tikzpicture}
is the chromatic subdivision of the segment
\begin{tikzpicture}[auto,line/.style={draw,thick,-latex',shorten >=2pt}]
\matrix[column sep=1.5mm,row sep=1mm]
{
\node (p00) {$\scriptscriptstyle a_0$}; & \node (p11) {$\scriptscriptstyle a_1$}; &
\node (p01) {$\scriptscriptstyle a_0$}; & \node (p10) {$\scriptscriptstyle a_1$}; \\
};
\draw (p00) -- (p11) -- (p01) -- (p10);
\end{tikzpicture}. In the Figure below, we show the effect of $P$
on the input complex ${\bf I}$ 
of Example \ref{ex:ex2}, together with the structure map
$\pi_I$ from $P({\bf I})$ to ${\bf I}$~: 

\begin{center}
\begin{tikzpicture}[auto,line/.style={draw,thick,-latex',shorten >=2pt}]
\matrix[column sep=1mm,row sep=1mm]
{
\node (p000u) {$\scriptscriptstyle (a_0,0)$}; & & & & & & & & & & & & \node (p00) {$\scriptscriptstyle (a_0,0)$}; \\
\node (p103l) {$\scriptscriptstyle (a_1,1)$}; & & \node (p103u) {$\scriptscriptstyle (a_1,0)$}; \\
\node (p006l) {$\scriptscriptstyle (a_0,0)$}; & & & & \node (p006u) {$\scriptscriptstyle (a_0,0)$}; \\
\node (p110l) {$\scriptscriptstyle (a_1,1)$}; & & & & & & \node (p110u) {$\scriptscriptstyle (a_1,0)$}; & & \node (dot1) {}; & \node[above] (dot2) {$\pi_I$}; & \node (dot3) {}; & & \node (p11) {$\scriptscriptstyle (a_1,1)$}; & & & & & & \node (p10) {$\scriptscriptstyle (a_1,0)$}; \\
& & \node (p003b) {$\scriptscriptstyle (a_0,1)$};  & & & & \node (p003r) {$\scriptscriptstyle (a_0,1)$}; \\
& & & & \node (p106b) {$\scriptscriptstyle (a_1,1)$};  & & \node (p106r) {$\scriptscriptstyle (a_1,0)$}; \\
& & & & & & \node (p010r) 
{$\scriptscriptstyle (a_0,1)$}; 
& & & & & & & & & & & & \node (p01) {$\scriptscriptstyle (a_0,1)$}; \\
};
\draw (p000u) -- (p103u) -- (p006u) -- (p110u);
\draw (p000u) -- (p103l) -- (p006l) -- (p110l);
\draw (p110l) -- (p003b) -- (p106b) -- (p010r);
\draw (p110u) -- (p003r) -- (p106r) -- (p010r);
\draw[->] (dot1) -- (dot3);
\draw (p00) -- (p10) -- (p01) -- (p11) -- (p00);
\end{tikzpicture}
\end{center}

\end{example}

By the equivalence of categories of Theorem \ref{thm:equiv2}, this
creates a functor that we still write $P$ associating some subcomplex
of the cartesian product ${\bf I}\times {\bf \Msf}$. Of course, since this
is a sub-object of a cartesian product, we have a canonical structure map
$\pi_I: P({\bf I}) \rightarrow {\bf I}$. 

We recall the classical approach in combinatorial topology for task solvability, see
e.g. \cite{HerlihyKR:2013}~:

\begin{definition}\label{def:decisiontask}
A decision task is a triple $T=(\mathcal{I},\mathcal{O},\Phi)$ where $\mathcal{I},\mathcal{O}$ are $n$-chromatic complexes and $\Phi$ is a map which satisfies:
\begin{enumerate}
  \item $\Phi(\sigma)$ is a subcomplex of $\mathcal{O}$.
  \item $\Phi(\tau)\subseteq\Phi(\sigma)$ if $\tau<\sigma$.
  \item $id(\sigma)=id(\Phi(\sigma))$.
\end{enumerate}
\end{definition}

In general, a map $\Phi$ from a simplicial complex $\mathcal{C}$ to a complex $\mathcal{D}$ which satisfies the conditions \emph{1} and \emph{2} of Definition~\ref{def:decisiontask} is called a \emph{carrier map}. In   Definition~\ref{def:decisiontask} each vertex represents the state of a single process. A simplex $\sigma^{(k)}$ is used to represent compatible states, of $k+1$ processes. In addition $\Phi(\sigma)$ defines which output states are legal for each input simplex $\sigma\in\mathcal{I}$. For instance in the binary consensus task each process proposes a binary value and  they have to agree in one of them. For two processes $p$ and $q$ there are four possible input configurations, which are represented as the maximal edges (maximal simplices) of $\mathcal{I}$ in Figure~\ref{fig:consensustask}. In this case if $\sigma_1=\{(p,0),(q,0)\}$ then $\Phi(\sigma_1)=\sigma_1$ or $\Phi(\sigma_2)=\mathcal{O}$ if $\sigma_2=\{(p,0),(p,1)\}$.
\begin{figure}
\begin{center}
\includegraphics[scale=0.5]{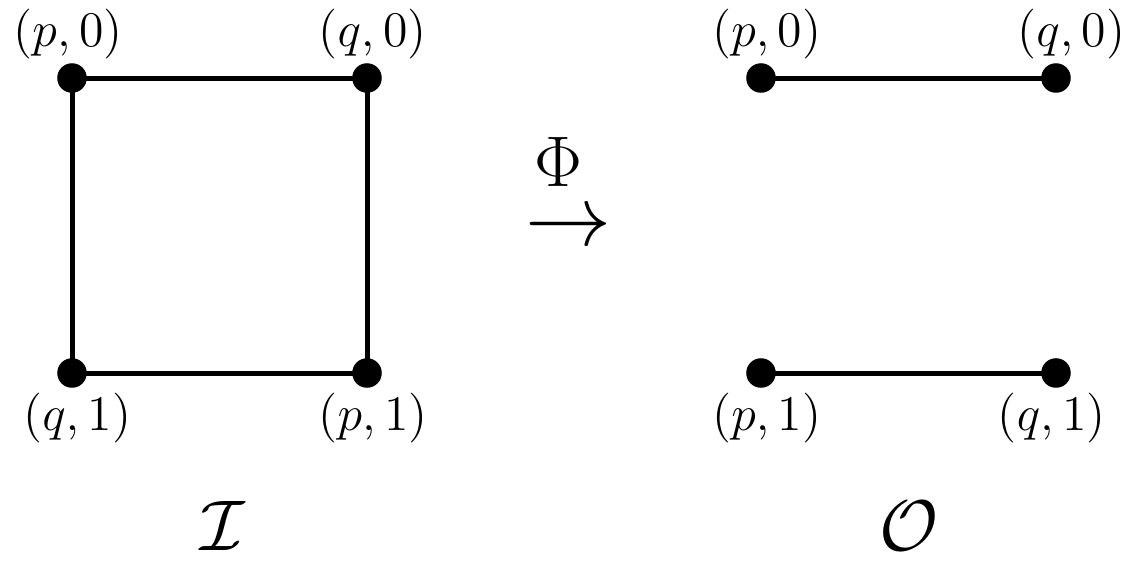}
\end{center}
\caption{The binary consensus task.}
\label{fig:consensustask}
\end{figure}

A chromatic complex $\mathcal{P}$ \emph{solves a task} $T=(\mathcal{I},\mathcal{O},\Phi)$ if there exists a simplicial map $\delta$ from $\mathcal{P}$ to $\mathcal{O}$ such that
\begin{enumerate}
  \item $\delta(\Psi(\sigma))\subseteq\Phi(\sigma)$ for all simplex $\sigma$.
  \item $id(v)=id(\delta(v))$ for all vertex $v\in\mathcal{P}$.
\end{enumerate}
where $\Psi$ is a carrier map from $\mathcal{I}$ to $\mathcal{P}$, see Figure~\ref{fig:solvetask}.
The {Asynchronous Computability Theorem} states that a decision task $T=(\mathcal{I},\mathcal{O},\Phi)$ is solvable by a wait-free protocol using read/write memory if only if there exist a chromatic subdivision $\chi^{(k)}(\mathcal{I})$ such that solves $T$.

Now, it is easy to see that, by Theorem \ref{thm:equiv}, this is exactly equivalent
to the formulation, using Kripke frames, of Section \ref{sec:tasksolvability}. 

\begin{figure}
\begin{center}
\includegraphics[scale=0.50]{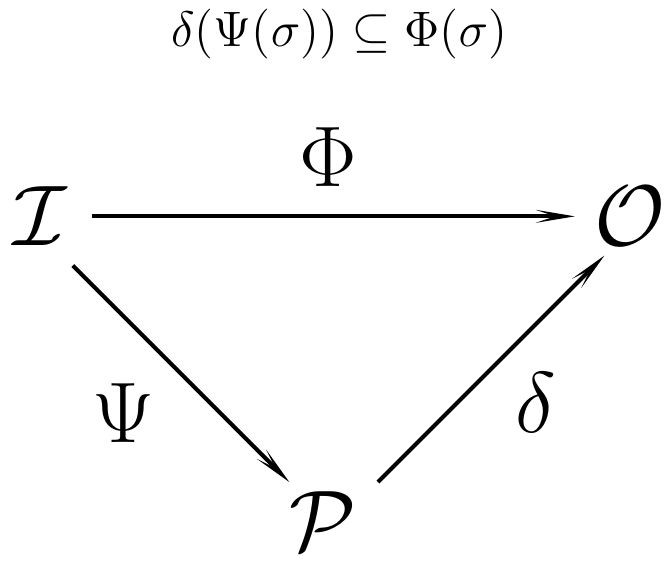}
\end{center}
\caption{Solvability of a task $(\mathcal{I},\mathcal{O},\Phi)$ by a protocol $\mathcal{P}$
with decision map $\delta$}
\label{fig:solvetask}
\end{figure}


As a matter of fact, quite generally speaking, a carrier map $\Phi$ can be seen as a functor from $p{\cal CS}$ to $p{\cal CS}$ 
that is such that $P\circ \Phi=Id_{p{\cal CS}}$. Indeed, restricting 
the latter equality on simplexes $\sigma$ of some input complex $\bf I$ implies that 
$\Phi(\sigma)$ is the subcomplex $P({\bf I})$. 
\comment{As a categorical apart\'e, this 
construction of $\Phi$, on simplexes, from $P$ extends to a functor
from pure chromatic simplicial complexes to pure chromatic simplicial complexes by
a classical categorical argument~: this is derived from 
the left Kan-extension, when it exists, 
along the Yoneda embedding, of $\Phi$ (seen as a functor from the site $\F$ of 
simplicial complexes \cite{Baez} to simplicial complexes). 
In the case in particular
where the action model $\Msf$ we are considering has empty preconditions, functor
$P$ is just the cartesian product (in the slice
category of colorings) by ${\Msf}$, which
has a right-adjoint (the $Hom$ functor, since simplicial complexes is a quasi-topos,
hence is locally cartesian-closed \cite{Baez}). This identifies carrier maps with
a left Kan-extension, which commutes with all colimits (this is then a strict
carrier map). 
}
\comment{
In the protocol complex approach of fault-tolerant distributed protocols, the specification
of the problem we want to solve is given by the specification relation ${\bf \Delta}$ from 
the input complex ${\bf I}$ to the output complex ${\bf O}$. In terms of formalisation, we will see
${\bf \Delta}$ in the what follows as a sub-(pure chromatic) simplicial complex of the product complex ${\bf I} \times {\bf O}$. 
}
We can interpret the task solvability diagram of Section \ref{sec:carrierprod} 
in purely topological terms, this interest being
that some topological invariants will prevent us from finding a map $h$ as above, showing
impossibility of the corresponding task specification. As $\pi_I$, $h$ and
$\pi_I$ are simplicial maps, we should have a corresponding diagram using any
topological invariant functor, such as homology or homotopy functors (see e.g. Example 
\ref{ex:ex2} below). In case of atomic read/write memory models, we know that
$P(I)$ corresponds to some subdivision of $I$, hence $\pi_I: P(I) \rightarrow I$ is 
a weak homotopy equivalence (which in turn implies knowledge is not improving in 
$P(I)$ with respect to $I$), this restricts a lot what task specifications
$\Delta$ can be solved, as also exemplified in Example \ref{ex:ex2}.

\begin{example}
\label{ex:ex2}
In the case of binary consensus, the task specifications is given by the Kripke
frame $\Delta$ below. The input Kripke model is composed of the four states which 
account for all binary inputs on two processes (modelled by the two predicates, 
$l_0$ for describing the input of agent $a_0$ and $l_1$ for describing the input
of agent $a_1$). The output complex is composed of the two states where the two local
values agree (we renamed the corresponding predicates for $a_i$, $d_i$)~: 
\begin{center}
\begin{tikzpicture}[auto,line/.style={draw,thick,-latex',shorten >=2pt}]
\matrix[column sep=2mm,row sep=2mm]
{
\node (alpha') {$\alpha'$}; & \node (beta') {$\beta'$}; &
\node (gamma') {$\gamma'$}; & & & & \node (alpha'') {$\alpha''$}; & \node (beta'') {$\beta''$}; & \node (gamma'') {$\gamma''$}; & \node (delta'') {$\delta''$}; \\
\node (Lalpha') {$\scriptscriptstyle \neg l_0,l_1,d_0,d_1$}; & \node (Lbeta') {$\scriptscriptstyle l_0,l_1,d_0,d_1$}; &
\node (Lgamma') {$\scriptscriptstyle l_0,\neg l_1,d_0,d_1$}; & \node (dot1) {}; & & \node (dot2) {}; & \node (Lalpha'') {$\scriptscriptstyle \neg l_0,l_1$}; & \node (Lbeta'') {$\scriptscriptstyle l_0,l_1$}; &
\node (Lgamma'') {$\scriptscriptstyle l_0,\neg l_1$}; & \node (Ldelta'') {$\scriptscriptstyle \neg l_0, \neg l_1$}; \\
\node (alpha) {$\alpha$}; & \node (beta) {$\beta$}; &
\node (gamma) {$\gamma$}; \\
\node (Lalpha) {$\scriptscriptstyle \neg l_0,l_1,\neg d_0,\neg d_1$}; & \node (Lbeta) {$\scriptscriptstyle \neg l_0,\neg l_1,\neg d_0,\neg d_1$}; &
\node (Lgamma) {$\scriptscriptstyle l_0,\neg l_1,\neg d_0,\neg d_1$}; \\
& \\
& \\
& \node (alpha0) {$\alpha 0$}; \\
& \node (Lalpha0) {$\scriptscriptstyle \neg d_0,\neg d_1$}; \\
& \node (alpha1) {$\alpha 1$}; \\
& \node (Lalpha1) {$\scriptscriptstyle d_0,d_1$}; \\
};
\draw (alpha) -- node[above] {$\scriptscriptstyle a_1$} (beta);
\draw (beta) -- node[above] {$\scriptscriptstyle a_0$} (gamma);
\draw (alpha'') -- node[above] {$\scriptscriptstyle a_1$} (beta'');
\draw (beta'') -- node[above] {$\scriptscriptstyle a_0$} (gamma'');
\draw (gamma'') -- node[above] {$\scriptscriptstyle a_1$} (delta'');
\draw (alpha''.north) to [out=30,in=150] node[above] {$\scriptscriptstyle a_0$} (delta''.north);
\draw[->,dashed,decorate] (alpha) -- (Lalpha);
\draw[->,dashed,decorate] (beta) -- (Lbeta);
\draw[->,dashed,decorate] (gamma) -- (Lgamma);
\draw (alpha') -- node[above] {$\scriptscriptstyle a_1$} (beta');
\draw (beta') -- node[above] {$\scriptscriptstyle a_0$} (gamma');
\draw[->,dashed,decorate] (alpha') -- (Lalpha');
\draw[->,dashed,decorate] (beta') -- (Lbeta');
\draw[->,dashed,decorate] (gamma') -- (Lgamma');
\draw[->,dashed,decorate] (alpha0) -- (Lalpha0);
\draw[->,dashed,decorate] (alpha1) -- (Lalpha1);
\draw[->] (Lbeta) -- node[right] {$\pi_O$} (alpha0);
\draw[->] (dot1) -- node[above] {$\pi_I$} (dot2);
\draw[->,dashed,decorate] (alpha'') -- (Lalpha'');
\draw[->,dashed,decorate] (beta'') -- (Lbeta'');
\draw[->,dashed,decorate] (gamma'') -- (Lgamma'');
\draw[->,dashed,decorate] (delta'') -- (Ldelta'');
\end{tikzpicture}
\end{center}


Using the equivalence of categories with simplicial models, we get 
${\bf \Delta} \subseteq {\bf I} \times {\bf O}$, where the state $((a_0,1,0),(a_1,1,0))$ (and
$((a_0,0,1),(a_1,0,1))$) corresponding to both processes beginning with input 1 and
deciding 0 (and corresponding to both processes beginning with input 0 and deciding 1), 
have been left out, shown below with its two projections on the input and output complexes~: 
\comment{
\begin{center}
\begin{tikzpicture}[auto,line/.style={draw,thick,-latex',shorten >=2pt}]
\matrix[column sep=1mm,row sep=1.2mm]
{
\node (p000u) {$\scriptscriptstyle a_0$}; & & & & & & & & &
\node (p000) {$\scriptscriptstyle (a_0,0,0)$}; & \node (p001) {$\scriptscriptstyle (a_0,0,1)$}; & & & & & & \node (p00) {$\scriptscriptstyle (a_0,0)$}; 
\\
\node (p103l) {$\scriptscriptstyle a_1$}; & & \node (p103u) {$\scriptscriptstyle a_1$};
& & & & & & & 
& & \node (p100) {$\scriptscriptstyle (a_1,0,0)$};  & \node (p101) {$\scriptscriptstyle (a_1,0,1)$}; & \node (dot1) {}; 
& \node[above] (dot2) {$\pi_{\bf I}$}; & \node (dot3) {}; & & & & \node (p10) {$\scriptscriptstyle (a_1,0)$}; \\
\node (p006l) {$\scriptscriptstyle a_0$}; & & & & \node (p006u) {$\scriptscriptstyle a_0$}; & & & & & 
\node (p110) {$\scriptscriptstyle (a_1,1,0)$}; & \node (p111) {$\scriptscriptstyle (a_1,1,1)$}; & & & & & & \node (p11) {$\scriptscriptstyle (a_1,1)$}; \\
\node (p110l) {$\scriptscriptstyle a_1$}; & & & & & & \node (p110u) {$\scriptscriptstyle a_1$}; & & & 
& \node (dot4) {}; & \node (p010) {$\scriptscriptstyle (a_0,1,0)$}; & \node (p011) {$\scriptscriptstyle (a_0,1,1)$}; & & & & & & & \node (p01) {$\scriptscriptstyle (a_0,1)$};
\\
& & \node (p003b) {$\scriptscriptstyle a_0$};  & & & & \node (p003r) {$\scriptscriptstyle a_0$}; & & & 
& \node[left] (dot5) {$\pi_{\bf O}$}; \\
& & & & \node (p106b) {$\scriptscriptstyle a_1$};  & & \node (p106r) {$\scriptscriptstyle a_1$}; & & & 
& \node (dot6) {}; \\
& & & & & & \node (p011b) {$\scriptscriptstyle a_0$}; \node (p010r) {$\scriptscriptstyle a_0$}; \\
& & & & & & & & & \node (d00) {$\scriptscriptstyle (a_0,0)$}; \\
& & & & & & & & & & & \node (d10) {$\scriptscriptstyle (a_1,0)$}; \\
& & & & & & & & & & \node (d11) {$\scriptscriptstyle (a_1,1)$}; \\
& & & & & & & & & & & & \node (d01) {$\scriptscriptstyle (a_0,1)$}; \\
};
\draw (p110) -- (p000) -- (p100) -- (p010);
\draw (p101) -- (p011);
\draw[dashed] (p011) -- (p111) -- (p001);
\draw[->] (dot1) -- (dot3);
\draw[->] (dot4) -- (dot6);
\draw (p00) -- (p10) -- (p01) -- (p11) -- (p00);
\draw (d00) -- (d10);
\draw (d11) -- (d01);
\draw (p000u) -- (p103u) -- (p006u) -- (p110u);
\draw (p000u) -- (p103l) -- (p006l) -- (p110l);
\draw (p110l) -- (p003b) -- (p106b) -- (p010r);
\draw (p110u) -- (p003r) -- (p106r) -- (p010r);
\draw[->,dashed] (p006u) -- node[above] {$h?$} (p110);
\end{tikzpicture}
\end{center}
}

\begin{center}
\begin{tikzpicture}[auto,line/.style={draw,thick,-latex',shorten >=2pt}]
\matrix[column sep=1mm,row sep=1.2mm]
{
\node (p000) {$\scriptscriptstyle (a_0,0,0)$}; & \node (p001) {$\scriptscriptstyle (a_0,0,1)$}; & & & & & & \node (p00) {$\scriptscriptstyle (a_0,0)$}; 
\\
& & \node (p100) {$\scriptscriptstyle (a_1,0,0)$};  & \node (p101) {$\scriptscriptstyle (a_1,0,1)$}; & \node (dot1) {}; 
& \node[above] (dot2) {$\pi_I$}; & \node (dot3) {}; & & & & \node (p10) {$\scriptscriptstyle (a_1,0)$}; \\
\node (p110) {$\scriptscriptstyle (a_1,1,0)$}; & \node (p111) {$\scriptscriptstyle (a_1,1,1)$}; & & & & & & \node (p11) {$\scriptscriptstyle (a_1,1)$}; \\
& \node (dot4) {}; & \node (p010) {$\scriptscriptstyle (a_0,1,0)$}; & \node (p011) {$\scriptscriptstyle (a_0,1,1)$}; & & & & & & & \node (p01) {$\scriptscriptstyle (a_0,1)$};
\\
& \node[left] (dot5) {$\pi_O$}; \\
& \node (dot6) {}; \\

\node (d00) {$\scriptscriptstyle (a_0,0)$}; \\
& & \node (d10) {$\scriptscriptstyle (a_1,0)$}; \\
& \node (d11) {$\scriptscriptstyle (a_1,1)$}; \\
& & & \node (d01) {$\scriptscriptstyle (a_0,1)$}; \\
};
\draw (p110) -- (p000) -- (p100) -- (p010);
\draw (p101) -- (p011);
\draw[dashed] (p011) -- (p111) -- (p001);
\draw[->] (dot1) -- (dot3);
\draw[->] (dot4) -- (dot6);
\draw (p00) -- (p10) -- (p01) -- (p11) -- (p00);
\draw (d00) -- (d10);
\draw (d11) -- (d01);
\end{tikzpicture}
\end{center}

In terms of topological invariants we see in the diagram above that the first
homology group of $P({\bf I})$ and of ${\bf I}$ are $\Z$ (they are both homotopically equivalent to the circle) and that $\pi_I$ induces the identity map in homology, 
whereas the first homology group of $\bf \Delta$ is 0. There is no factorization of the
identity through the 0 map (which is $\pi_{\bf I}$ in homology), so binary consensus is
not solvable. 

Now the pseudo consensus specification is as follows~: 

\begin{center}
\begin{tikzpicture}[auto,line/.style={draw,thick,-latex',shorten >=2pt}]
\matrix[column sep=2mm,row sep=2mm]
{
\node (p000) {$\scriptscriptstyle (a_0,0,0)$}; & \node (p001) {$\scriptscriptstyle (a_0,0,1)$}; & & & & & & \node (p00) {$\scriptscriptstyle (a_0,0)$}; 
\\
& & \node (p100) {$\scriptscriptstyle (a_1,0,0)$};  & \node (p101) {$\scriptscriptstyle (a_1,0,1)$}; & \node (dot1) {}; 
& \node[above] (dot2) {$\pi_{\bf I}$}; & \node (dot3) {}; & & & & \node (p10) {$\scriptscriptstyle (a_1,0)$}; \\
\node (p110) {$\scriptscriptstyle (a_1,1,0)$}; & \node (p111) {$\scriptscriptstyle (a_1,1,1)$}; & & & & & & \node (p11) {$\scriptscriptstyle (a_1,1)$}; \\
& \node (dot4) {}; & \node (p010) {$\scriptscriptstyle (a_0,1,0)$}; & \node (p011) {$\scriptscriptstyle (a_0,1,1)$}; & & & & & & & \node (p01) {$\scriptscriptstyle (a_0,1)$};
\\
& \node[left] (dot5) {$\pi_{\bf O}$}; \\
& \node (dot6) {}; \\

\node (d00) {$\scriptscriptstyle (a_0,0)$}; \\ 
& & \node (d10) {$\scriptscriptstyle (a_1,0)$}; \\
& \node (d11) {$\scriptscriptstyle (a_1,1)$}; \\
& & & \node (d01) {$\scriptscriptstyle (a_0,1)$};\\
};
\draw (p110) -- (p000) -- (p100) -- (p010);
\draw (p101) -- (p011);
\draw[dashed] (p011) -- (p111) -- (p001);
\draw[->] (dot1) -- (dot3);
\draw[->] (dot4) -- (dot6);
\draw (p00) -- (p10) -- (p01) -- (p11) -- (p00);
\draw (d00) -- (d10);
\draw (d11) -- (d01);
\draw (d10) -- (d01); 
\draw (p010) -- (p101);
\draw[dashed] (p000) -- (p111);
\end{tikzpicture}
\end{center}
Topologically, $P({\bf I})$, ${\bf \Delta}$ and
${\bf I}$ are homotopy equivalent
to a circle, and $\pi_I$ (the identity map in homology) can be
factored through the homology of $P({\bf I})$. Indeed, 
binary pseudo-consensus is solvable in one round in atomic read/write memory. 
\end{example}







\subsection{Common knowledge and connectivity}

\label{commonknowledge}

For $B$ a subgroup of agents (or processes), we recall, from Section \ref{sec:dynEpSemantics}
the notion of group knowledge : 
\[
E_B \varphi = \bigwedge_{b\in B} K_b \varphi
\]
Common knowledge for group $B$ is, semantically, the least solution to the equation :
\[
C_B \varphi = \varphi \wedge E_B(C_B \varphi)
\]
Given a Kripke model $M=\la S, \sim^A,L^{\I{AP}} \ra$, and a group of agents
$B=\{s_1,\ldots,s_k\}$. Define $\sim_B$ to be the transitive closure
of all the $\sim_a$, $a \in B$. This means that $s \sim_B t$ if and only if 
there exists $s_1,\ldots, s_l \in S$ such that $s \sim_{a_{i_1}} s_1 \sim_{a_{i_2}} s_2
\ldots s_l \sim_{a_{i_{l+1}}} t$ with $1 \leq a_{i_1},\ldots,a_{i_{l+1}} \leq k$. 

Now we have the following semantic characterization of $C_B \varphi$ at a state
$s$ of a Kripke model $M=\la S, \sim^A,L^{\I{AP}} \ra$~: 

$M, s \models C_B \varphi$ if and only if $M, t \models \varphi$ for all $t$ such that $s \sim_B t$.

In the sequel, by an abuse of notation, we will be identifying any Kripke model $M$ with
its simplicial model counterpart $(S,l,v)$ 
under the equivalence of Theorem \ref{thm:equiv2} between Kripke models and simplicial models,
by identifying states $S$ of Kripke models with facets of the simplicial model $(S,l,v)$. 
Given $B$ a group of agents, we form the Kripke model (and equivalently the simplicial model)
$M_B$ from $M$ by restricting the
accessibility relation to those $\sim_a$ for $a \in B$, and the formulas associated to each
state to contain only formulas describing agent in $B$. 
Then we 
have~: 

\begin{theorem}
\label{thm:commonknowledge}
The formula $\varphi$ is common knowledge at state $s$ in model $M$ if and only if $\varphi$ is true
on all facets in the same connected components of simplicial model $M$ restricted to the subgroup
$B$ of agents. 
In other words : for all facets $t$ in $M_B$ which are in the same connected component as $s$
in $M_B$, we have $\varphi \in v(t)$. 
\end{theorem}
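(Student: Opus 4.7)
The plan is to reduce the statement to a purely combinatorial claim about connectedness, using the semantic characterization of $C_B$ that is recalled immediately before the theorem, together with the equivalence of categories of Theorem \ref{thm:equiv2}.

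First, I would recall that by the semantic characterization just stated, $M, s \models C_B \varphi$ is equivalent to $M, t \models \varphi$ for every facet $t$ with $s \sim_B t$, where $\sim_B$ is the transitive closure of the relations $\sim_a$ for $a \in B$. Hence the whole theorem reduces to the following key lemma: two facets $s$ and $t$ satisfy $s \sim_B t$ if and only if they lie in the same connected component of the simplicial model $M_B$.

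For the key lemma, I would translate everything through the equivalence of categories between proper Kripke frames and pure chromatic simplicial complexes (Theorem \ref{thm:equiv}, extended to models by Theorem \ref{thm:equiv2}). Under this equivalence, facets of the simplicial model are in bijection with the states of the Kripke model, and the edge relation $u \sim_a v$ in the Kripke model corresponds to the two facets sharing a vertex colored $a$, i.e.\ $a \in l(X_u \cap X_v)$. Restricting the model to the subgroup $B$ (on the Kripke side: keep only the relations $\sim_a$ for $a \in B$) corresponds on the simplicial side to restricting the complex to the vertices colored in $B$: two facets of $M_B$ share a vertex iff their corresponding states are $\sim_a$-related for some $a \in B$.

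Now I would argue both directions of the key lemma. If $s \sim_B t$, take a chain $s = s_0 \sim_{a_1} s_1 \sim_{a_2} \cdots \sim_{a_k} s_k = t$ with each $a_i \in B$; translating facet by facet, consecutive facets $X_{s_{i-1}}$ and $X_{s_i}$ share a vertex of color $a_i$ in $M_B$, so the whole chain exhibits $X_s$ and $X_t$ in the same connected component of $M_B$. Conversely, a path of facets in $M_B$ connecting $X_s$ to $X_t$ gives, at each step, a shared vertex whose color $a \in B$ witnesses a relation $\sim_a$ in the Kripke model; composing yields $s \sim_B t$.

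Combining the key lemma with the semantic characterization of $C_B$ gives the theorem: $\varphi$ is common knowledge for $B$ at $s$ iff it holds at every facet $\sim_B$-reachable from $s$, iff it holds at every facet in the connected component of $s$ in $M_B$. The only mildly delicate point, which I would treat carefully, is making sure that the restriction operation on the Kripke side (dropping $\sim_a$ for $a \notin B$) is compatible under the equivalence with the restriction on the simplicial side (keeping only vertices colored in $B$); this follows directly from the construction of the functors $F$ and $G$ in the proof of Theorem \ref{thm:equiv}, since colors of vertices are precisely agents and the accessibility relation between facets is generated by shared colored vertices.
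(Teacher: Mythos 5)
Your proposal is correct and follows exactly the route the paper intends (the paper omits an explicit proof, but the semantic characterization of $C_B\varphi$ via $\sim_B$ stated just before the theorem, combined with the equivalence of Theorem \ref{thm:equiv2}, is precisely the argument you spell out). Your key lemma identifying $\sim_B$-chains with facet paths through shared $B$-colored vertices is the right content, and your care about the compatibility of the two restriction operations under the functors $F$ and $G$ is exactly the point that needs checking.
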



\begin{example}
Let us consider again the binary consensus as specified in Example \ref{ex:ex2}. 
In $\Delta$, there are two connected components~: in the first one, by Theorem
\ref{thm:commonknowledge}, it is
common knowledge that $\neg l_0 \vee \neg l_1$, meaning that either $a_0$
started with $0$ or $a_1$ started with 0. In the second component, by the same
theorem, it is common knowledge that $l_0 \vee l_1$, meaning that either $a_0$
started with 1, or $a_1$ started with 1. Now, in the protocol complex of
Example \ref{ex:prot}, there is just no common knowledge involving local input
values of $a_0$ nor $a_1$. By Theorem \ref{thm:Kripketasksolv2}, if binary
consensus were solvable, we would have a morphism of Kripke model $h$ from 
the protocol Kripke model $P$ to $\Delta$, implying by Theorem \ref{prop:prop1}
that any knowledge available in $\Delta$ would be available in $P$, in particular
common knowledge. This is clearly not true here and binary consensus is not
solvable. 
\end{example}


\section{Conclusions}
\label{sec:conclusions}
{We have made a first step into defining a version of muti-agent dynamic epistemic logic
using as models higher dimensional simplical complexes. Although the step is modest,
it already shows that under a  class of action models, topological invariants 
of the initial epistemic model are fully preserved. We have worked out in detail the class of action models 
for a well studied distributed computing setting, where asynchronous processes that can crash communicate with each
other through read/write shared variables. We established a categorical equivalence between such systems
and dynamic epistemic models, and hence a precise, close relationship between the 
combinatorial topology theory of distributed computing and dynamic epistemic logic.
}

Many interesting questions are left for future work.
We have developed all our theory on facets, interpreting only the top dimensional simplexes used in distributed computing,
which is sometimes what is done, but often also simplexes of lower dimension are used to model process crashes.
Another, main avenue that we left for future study is the role of bisimulations, which are very important in dynamic epistemic
logic, and have also been considered in algebraic topology.
Of course it would be of interest to study other distributed computing settings, especially those which have stronger
communication objects available in modern multiprocessor architecture, and which are known to yield complexes
that preserve less well the topology of the input complex; indeed, their additional power is expressed as higher dimensional "holes" in
the protocol complex. It would be very interesting to find a formalization of such topological properties
in terms of knowledge, and thus obtain a generalization from common knowledge (that is tightly related to 1-dimensional connectivity)
to other form of group knowledge (related to higher-dimensional connectivity).

\subparagraph*{Acknowledgements.}
We thank Carlos Velarde and David Rosenblueth for their involvement
in the early stages of this research, and their help in developing the dual of a Kripke graph.
This work was partially supported by PAPIIT-UNAM IN109917.
Sergio Rajsbaum would like to acknowledge the Ecole Polytechnique for financial support through the 2016-2017 Visiting Scholar Program.




\begin{thebibliography}{10}

\bibitem{AADGMSsnaps}
Yehuda Afek, Hagit Attiya, Danny Dolev,  Eli  Gafni, Michael  Merritt,  and Nir Shavit.
\newblock Atomic Snapshots of Shared Memory.
\newblock {\em  J. of the ACM.}, 40(4):873--890, September 1993.

\bibitem{AttiyaR2002}
Hagit Attiya and Sergio Rajsbaum.
\newblock The combinatorial structure of wait-free solvable tasks.
\newblock {\em SIAM J. Comput.}, 31(4):1286--1313, April 2002.

\bibitem{2004dc_AW}
Hagit Attiya and Jennifer Welch.
\newblock {\em Distributed Computing: Fundamentals, Simulations, and Advanced
  Topics}.
\newblock Wiley, 2 edition, 2004.
\newblock URL:
  \url{http://www.wiley.com/WileyCDA/WileyTitle/productCd-0471453242.html}.

\bibitem{Baez}
J.~C. {Baez} and A.~E. {Hoffnung}.
\newblock {Convenient Categories of Smooth Spaces}.
\newblock {\em ArXiv e-prints}, July 2008.
\newblock \href {http://arxiv.org/abs/0807.1704} {\path{arXiv:0807.1704}}.

\bibitem{baltag&:98}
A.~Baltag, L.S. Moss, and S.~Solecki.
\newblock The logic of common knowledge, public announcements, and private
  suspiciouns.
\newblock In {\em Proceedings of the 7th conference on theoretical aspects of
  rationality and knowledge (TARK 98)}, pages 43--56, 1998.

\bibitem{BaltagM2004}
Alexandru Baltag and Lawrence~S. Moss.
\newblock Logics for epistemic programs.
\newblock {\em Synthese}, 139(2):165--224, 2004.
\newblock \href {http://dx.doi.org/10.1023/B:SYNT.0000024912.56773.5e}
  {\path{doi:10.1023/B:SYNT.0000024912.56773.5e}}.

\bibitem{sep-dynamic-epistemic}
Alexandru Baltag and Bryan Renne.
\newblock Dynamic epistemic logic.
\newblock In Edward~N. Zalta, editor, {\em The Stanford Encyclopedia of
  Philosophy}. Metaphysics Research Lab, Stanford University, winter 2016
  edition, 2016.

\bibitem{InterpretedSystems}
N. Bezhanishvili and W. Van der Hoek.
\newblock Structures for Epistemic Logics. 
\newblock In Outstanding Contributions to Logic, Volume 5, pages 175--202, 2014.

\bibitem{BiranMZ90}
Ofer Biran, Shlomo Moran, and Shmuel Zaks.
\newblock {A Combinatorial Characterization of the Distributed 1-Solvable
  Tasks}.
\newblock {\em J. Algorithms}, 11(3):420--440, 1990.

\bibitem{1993GeneralizedFLPImposibility_BG}
Elizabeth Borowsky and Eli Gafni.
\newblock Generalized flp impossibility result for {t}-resilient asynchronous
  computations.
\newblock In {\em Proc. 25th Annual ACM Symp. on Theory of Computing}, STOC,
  pages 91--100, New York, NY, USA, 1993. ACM.
\newblock URL: \url{http://doi.acm.org/10.1145/167088.167119}, \href
  {http://dx.doi.org/10.1145/167088.167119} {\path{doi:10.1145/167088.167119}}.

\bibitem{1997SimpleAlgorithmicallyReasoned_BG}
Elizabeth Borowsky and Eli Gafni.
\newblock A simple algorithmically reasoned characterization of wait-free
  computation (extended abstract).
\newblock In {\em Proceedings of the Sixteenth Annual ACM Symposium on
  Principles of Distributed Computing}, PODC '97, pages 189--198, New York, NY,
  USA, 1997. ACM.

\bibitem{BorowskyGLR01}
Elizabeth Borowsky, Eli Gafni, Nancy Lynch, and Sergio Rajsbaum.
\newblock The {BG} distributed simulation algorithm.
\newblock {\em Distributed Computing}, 14(3):127--146, 2001.

\bibitem{1993MoreChoices_Ch}
Soma Chaudhuri.
\newblock More choices allow more faults: set consensus problems in totally
  asynchronous systems.
\newblock {\em Information and Computation}, 105(1):132--158, 1993.

\bibitem{DEL:2007}
Hans~van Ditmarsch, Wiebe van~der Hoek, and Barteld Kooi.
\newblock {\em Dynamic Epistemic Logic}.
\newblock Springer Publishing Company, Incorporated, 1st edition, 2007.

\bibitem{FischerLP85}
M.~Fischer, N.~A. Lynch, and M.~S. Paterson.
\newblock {Impossibility Of Distributed Commit With One Faulty Process}.
\newblock {\em Journal of the ACM}, 32(2), April 1985.

\bibitem{GafniKM:generalizedACT:2014}
Eli Gafni, Petr Kuznetsov, and Ciprian Manolescu.
\newblock A generalized asynchronous computability theorem.
\newblock In {\em Proceedings of the 2014 ACM Symposium on Principles of
  Distributed Computing}, PODC '14, pages 222--231, New York, NY, USA, 2014.
  ACM.
\newblock URL: \url{http://doi.acm.org/10.1145/2611462.2611477}, \href
  {http://dx.doi.org/10.1145/2611462.2611477}
  {\path{doi:10.1145/2611462.2611477}}.

\bibitem{Havlicek2000}
John Havlicek.
\newblock Computable obstructions to wait-free computability.
\newblock {\em Distributed Computing}, 13(2):59--83, 2000.

\bibitem{HerlihyKR:2013}
Maurice Herlihy, Dmitry Kozlov, and Sergio Rajsbaum.
\newblock {\em Distributed Computing Through Combinatorial Topology}.
\newblock Elsevier-Morgan Kaufmann Publishers Inc., San Francisco, CA, USA, 1st
  edition, 2013.

\bibitem{HerlihyR12}
Maurice Herlihy and Sergio Rajsbaum.
\newblock Simulations and reductions for colorless tasks.
\newblock In {\em Proceedings of the 2012 ACM symposium on Principles of
  distributed computing}, PODC '12, pages 253--260, New York, NY, USA, 2012.
  ACM.

\bibitem{2013TopologuDistributedAdversaries_HR}
Maurice Herlihy and Sergio Rajsbaum.
\newblock The topology of distributed adversaries.
\newblock {\em Distributed Computing}, 26(3):173--192, 2013.

\bibitem{1999TopologicalStructureAsynchronous_HS}
Maurice Herlihy and Nir Shavit.
\newblock The topological structure of asynchronous computability.
\newblock {\em J. ACM}, 46(6):858--923, November 1999.
\newblock URL: \url{http://doi.acm.org/10.1145/331524.331529}, \href
  {http://dx.doi.org/10.1145/331524.331529} {\path{doi:10.1145/331524.331529}}.

\bibitem{HSbook}
Maurice Herlihy and Nir Shavit.
\newblock {\em The Art of Multiprocessor Programming}.
\newblock Elsevier, 1 edition, 2012.
\newblock URL:
  \url{https://www.elsevier.com/books/the-art-of-multiprocessor-programming-revised-reprint/herlihy/978-0-12-397337-5}.

\bibitem{Herlihy:waitFree1988}
Maurice~P. Herlihy.
\newblock Impossibility and universality results for wait-free synchronization.
\newblock In {\em Proceedings of the Seventh Annual ACM Symposium on Principles
  of Distributed Computing}, PODC '88, pages 276--290, New York, NY, USA, 1988.
  ACM.
\newblock URL: \url{http://doi.acm.org/10.1145/62546.62593}, \href
  {http://dx.doi.org/10.1145/62546.62593} {\path{doi:10.1145/62546.62593}}.

\bibitem{HoestS97}
Gunnar Hoest and Nir Shavit.
\newblock Towards a topological characterization of asynchronous complexity.
\newblock In {\em Proc. 16th ACM Symp. Principles of distributed computing},
  PODC, pages 199--208, New York, NY, USA, 1997. ACM.

\bibitem{Kogan:2012:MCF}
Alex Kogan and Erez Petrank.
\newblock A methodology for creating fast wait-free data structures.
\newblock In {\em Proceedings of the 17th ACM SIGPLAN Symposium on Principles
  and Practice of Parallel Programming}, PPoPP '12, pages 141--150, New York,
  NY, USA, 2012. ACM.
\newblock URL: \url{http://doi.acm.org/10.1145/2145816.2145835}, \href
  {http://dx.doi.org/10.1145/2145816.2145835}
  {\path{doi:10.1145/2145816.2145835}}.

\bibitem{kozlov:2007}
Dmitry Kozlov.
\newblock {\em Combinatorial Algebraic Topology}.
\newblock Springer, 2007.

\bibitem{LouiAA:87}
M.~C. Loui and H.~H. Abu-Amara.
\newblock {\em {Memory requirements for agreement among unreliable asynchronous
  processes}}, volume~4, pages 163--183.
\newblock JAI press, 1987.

\bibitem{LynchBook:1996}
Nancy Lynch.
\newblock {\em Distributed Algorithms}.
\newblock Morgan Kaufmann, 1996.
\newblock URL:
  \url{https://www.elsevier.com/books/distributed-algorithms/lynch/978-1-55860-348-6}.

\bibitem{MosesR2002}
Yoram Moses and Sergio Rajsbaum.
\newblock A layered analysis of consensus.
\newblock {\em SIAM J. Comput.}, 31(4):989--1021, April 2002.
\newblock URL: \url{http://dx.doi.org/10.1137/S0097539799364006}, \href
  {http://dx.doi.org/10.1137/S0097539799364006}
  {\path{doi:10.1137/S0097539799364006}}.

\bibitem{plaza:89}
J.A. Plaza.
\newblock Logics of public communications.
\newblock In M.L. Emrich, M.S. Pfeifer, M.~Hadzikadic, and Z.W. Ras, editors,
  {\em Proceedings of the 4th International Symposium on Methodologies for
  Intelligent Systems}, pages 201--216, 1989.

\bibitem{Porter}
Timothy Porter. 
\newblock Interpreted systems and Kripke models for multiagent systems from a categorical perspective.
\newblock In Theoretical Computer Science, Volume 323, Number 1, pages 235--266, 2004.

\bibitem{RajsbaumIterated2010}
Sergio Rajsbaum.
\newblock Iterated Shared Memory Models.
\newblock In {\em LATIN}, volume 6034 of {\em Lecture Notes in Computer
  Science}, pages 407--416. Springer, 2010.

\bibitem{2008IteratedRestrictedImmdediate_RRT}
Sergio Rajsbaum, Michel Raynal, and Corentin Travers.
\newblock The iterated restricted immediate snapshot model.
\newblock In {\em COCOON}, volume 5092 of {\em Lecture Notes in Computer
  Science}, pages 487--497. Springer, 2008.

\bibitem{Raynal-waitFree2005}
Michel Raynal.
\newblock Wait-free computing: An introductory lecture.
\newblock {\em Future Gener. Comput. Syst.}, 21(5):655--663, May 2005.
\newblock URL: \url{http://dx.doi.org/10.1016/j.future.2004.05.005}, \href
  {http://dx.doi.org/10.1016/j.future.2004.05.005}
  {\path{doi:10.1016/j.future.2004.05.005}}.

\bibitem{FHMVbook}
Yoram~Moses Ronald~Fagin, Joseph~Halpern and Moshe Vardi.
\newblock {\em Reasoning About Knowledge}.
\newblock MIT Press, 1 edition, 1995.
\newblock URL: \url{https://mitpress.mit.edu/books/reasoning-about-knowledge}.

\bibitem{SaksZ00}
Michael Saks and Fotios Zaharoglou.
\newblock {Wait-Free k-Set Agreement is Impossible: The Topology of Public
  Knowledge}.
\newblock {\em SIAM J. Comput.}, 29(5):1449--1483, 2000.
\newblock URL: \url{http://dx.doi.org/10.1145/167088.167122}, \href
  {http://dx.doi.org/10.1145/167088.167122} {\path{doi:10.1145/167088.167122}}.

\end{thebibliography}





\end{document}